\definecolor{myGreen}{rgb}{0.0, 0.3, 0.0}
\spnewtheorem{notation}[theorem]{Notation}{\bfseries}{\itshape}
\spnewtheorem{fact}[theorem]{Fact}{\bfseries}{\itshape}
\spnewtheorem{pb}[theorem]{Problem}{\bfseries}{\itshape}
\spnewtheorem{conj}[theorem]{Conjecture}{\bfseries}{\itshape}
\spnewtheorem{assumption}[theorem]{Assumption}{\bfseries}{\itshape}
\DeclareMathOperator*{\bias}{bias}
\newcommand{\eqdef}{\mathop{=}\limits^{\triangle}}
\newcommand{\transp}[1]{#1^\intercal}
\newcommand*{\transpo}{{\mathpalette\@transpose{}}}
\newcommand*{\@transpose}[2]{\raisebox{\depth}{$\m@th#1\intercal$}}
\newcommand{\F}{\mathbb{F}}
\newcommand{\Ft}{\mathbb{F}_2}
\newcommand{\RR}{\mathbb{R}}
\newcommand{\NN}{\mathbb{N}}
\newcommand{\C}{\mathscr{C}}
\newcommand{\CC}{\C}
\newcommand{\DD}{\ensuremath{\mathscr{D}}}
\newcommand{\IInt}[2]{\left\llbracket #1, #2 \right\rrbracket}
\newcommand{\Iint}[2]{\IInt{#1}{#2}}
\newcommand{\Ctime}{\mathrm{T}}
\newcommand{\Cspace}{\mathrm{S}}
\newcommand{\CtimeD}[1]{\mathrm{T}_{\text{Dumer#1}}}
\newcommand{\CspaceD}[1]{\mathrm{S}_{\text{Dumer#1}}}
\newcommand{\alphaD}[1]{\alpha_{\text{Dumer#1}}}
\newcommand{\sA}{{\mathscr{A}}}
\newcommand{\sB}{{\mathscr{B}}}
\newcommand{\sH}{{\mathscr{H}}}
\newcommand{\sI}{{\mathscr{I}}}
\newcommand{\sL}{{\mathscr{L}}}
\newcommand{\sN}{{\mathscr{N}}}
\newcommand{\sP}{{\mathscr{P}}}
\newcommand{\sR}{{\mathscr{R}}}
\newcommand{\sS}{{\mathscr{S}}}
\newcommand{\sT}{{\mathscr{T}}}
\newcommand{\sY}{{\mathscr{Y}}}
\newcommand{\dist}[2]{\left| #1 - #2 \right|}
\newcommand{\scal}[2]{\left\langle #1 , #2 \right\rangle}
\newcommand{\hw}[1]{\left|#1\right|}
\newcommand{\prob}[2]{\mathbb{P}_{#1}\left(#2\right)}
\DeclareMathOperator{\Prob}{{\mathbb{P}}}
\newcommand{\var}[1]{{\textbf{Var}}\left( #1\right)}
\newcommand{\deltagv}{\delta_{\mathrm{GV}}}
\newcommand{\deltaGV}{\deltagv}
\newcommand{\OO}[1]{\ensuremath{\mathop{}\mathopen{}O\mathopen{}\left(#1\right)}}
\newcommand{\oo}[1]{\ensuremath{\mathop{}\mathopen{}o\mathopen{}\left(#1\right)}}
\newcommand{\om}[1]{\ensuremath{\mathop{}\mathopen{} \omega \mathopen{}\left(#1\right)}}
\newcommand{\Th}[1]{\Theta\left( #1 \right)}
\newcommand{\Tht}[1]{\widetilde{\Theta}\left( #1 \right)}
\newcommand{\OOt}[1]{\ensuremath{\mathop{}\mathopen{}\widetilde{O}\mathopen{}\left(#1\right)}}
\newcommand{\Om}[1]{\Omega \left( #1 \right)}
\newcommand{\Omt}[1]{\widetilde{\Omega} \left( #1 \right)}
\newcommand\dual[1]{#1^{\perp}}
\renewcommand{\vec}[1]{\mathbf{#1}}
\newcommand{\Gm}{{\mathbf{G}}}
\newcommand{\Hm}{{\mathbf{H}}}
\newcommand{\Am}{{\mathbf{A}}}
\newcommand{\Mm}{{\mathbf{A}}}
\newcommand{\Pm}{{\mathbf{P}}}
\newcommand{\Id}{{\mathbf{I}}}
\newcommand{\av}{{\mathbf{a}}}
\newcommand{\cv}{{\mathbf{c}}}
\newcommand{\ev}{{\mathbf{e}}}
\newcommand{\hv}{{\mathbf{h}}}
\newcommand{\sv}{{\mathbf{s}}}
\newcommand{\uv}{{\mathbf{u}}}
\newcommand{\xv}{{\mathbf{x}}}
\newcommand{\yv}{{\mathbf{y}}}
\newcommand{\Niter}{\text{N}_{\text{iter}}}
\newcommand{\Teq}{\text{T}_{\text{eq}}}
\newcommand{\Seq}{\text{S}_{\text{eq}}}
\newcommand{\Psucc}{\text{P}_{\text{succ}}}
\newcommand{\ie}{{\em i.e. }}
\renewcommand\fnum@algorithm{\fname@algorithm~\thealgorithm.}
\newcommand{\ALGtikzmarkcolor}{black}\newcommand{\ALGtikzmarkextraindent}{4pt}\newcommand{\ALGtikzmarkverticaloffsetstart}{-.5ex}\newcommand{\ALGtikzmarkverticaloffsetend}{-.5ex}\makeatletter
\newcounter{ALG@tikzmark@tempcnta}
\newcommand\ALG@tikzmark@start{\global\let\ALG@tikzmark@last\ALG@tikzmark@starttext \expandafter\edef\csname ALG@tikzmark@\theALG@nested\endcsname{\theALG@tikzmark@tempcnta}\tikzmark{ALG@tikzmark@start@\csname ALG@tikzmark@\theALG@nested\endcsname}\addtocounter{ALG@tikzmark@tempcnta}{1}}
\def\ALG@tikzmark@starttext{start}
\newcommand\ALG@tikzmark@end{\ifx\ALG@tikzmark@last\ALG@tikzmark@starttext
\else
        \tikzmark{ALG@tikzmark@end@\csname ALG@tikzmark@\theALG@nested\endcsname}\tikz[overlay,remember picture] \draw[\ALGtikzmarkcolor] let \p{S}=($(pic cs:ALG@tikzmark@start@\csname ALG@tikzmark@\theALG@nested\endcsname)+(\ALGtikzmarkextraindent,\ALGtikzmarkverticaloffsetstart)$), \p{E}=($(pic cs:ALG@tikzmark@end@\csname ALG@tikzmark@\theALG@nested\endcsname)+(\ALGtikzmarkextraindent,\ALGtikzmarkverticaloffsetend)$) in (\x{S},\y{S})--(\x{S},\y{E});\fi
    \gdef\ALG@tikzmark@last{end}}
\apptocmd{\ALG@beginblock}{\ALG@tikzmark@start}{}{\errmessage{failed to patch}}
\pretocmd{\ALG@endblock}{\ALG@tikzmark@end}{}{\errmessage{failed to patch}}
\begin{document}
\title{Statistical Decoding 2.0: Reducing Decoding to LPN}
\author{K{\'e}vin Carrier\inst{1} \and
Thomas Debris-Alazard\inst{2} \and
Charles Meyer-Hilfiger\inst{3} \and Jean-Pierre Tillich\inst{3}}
\authorrunning{K. Carrier \and T. Debris-Alazard \and C. Meyer-Hilfiger \and J-P. Tillich}
\institute{
ETIS Laboratory, CY Cergy-Paris University, \email{kevin.carrier@ensea.fr}
\and
Project GRACE, Inria Saclay-Ile de France, \email{thomas.debris@inria.fr}
\and
Project COSMIQ, Inria de Paris, \email{charles.meyer-hilfiger@inria.fr,jean-pierre.tillich@inria.fr}
}
\maketitle              

\begin{abstract}
The security of code-based cryptography relies  primarily on the hardness of generic decoding with linear codes. The best generic decoding algorithms are all improvements of an old algorithm due to Prange:  they are known under the name of  information set decoders  (ISD).
A while ago, a generic decoding algorithm which does not belong to this family was proposed: statistical decoding. 
It is a randomized algorithm that requires the computation of a large set of parity-checks of moderate weight, and uses some kind of majority voting on these equations to recover the error. This algorithm was long forgotten because even the best variants of it 
performed poorly when compared to the simplest ISD algorithm.
 We revisit this old algorithm by using parity-check equations in a more general way. Here the parity-checks are used  to get LPN samples with a secret which is part of the error and the LPN noise is related to the weight of the parity-checks we produce. The corresponding LPN problem is then solved by standard Fourier techniques. By properly choosing the method of producing these low weight equations and the size of the LPN problem, we are able to outperform in this way significantly information set decoders at code rates smaller than $0.3$.  It gives for the first time after $60$ years, a better decoding algorithm for a significant range which does not belong to the ISD family. 
\end{abstract}

\section{Introduction}
\label{sec:intro}

\subsection{The Decoding  Problem and Code-based Cryptography}
Code-based cryptography relies crucially on the hardness of decoding generic linear codes which can be expressed as follows in the binary case
\begin{pb}[decoding a linear code]
\label{pb:decodingFq}
Let $\C$ be a binary linear code over $\Ft$ of dimension $k$ and length $n$, \ie a subspace of $\Ft^n$ of dimension $k$.
We are given $\yv \in \Ft^n$, an integer $t$ and want to find  a codeword $\cv \in \C$ and an error vector $\ev \in \Ft^n$ of Hamming weight $|\ev|=t$ for which $\yv = \cv+\ev$.
\end{pb}
This terminology stems from information theory, $\yv$ is a noisy version of a codeword $\cv$: $\yv=\cv+\ev$ where $\ev$ is a vector of weight $t$ and we want to recover the original codeword $\cv$. It can also be viewed as solving an underdetermined linear system with a weight constraint. Indeed, we can associate to a subspace $\C$ of dimension $k$ of $\Ft^n$ a binary $(n-k)\times n$ matrix $\Hm$ 
(also called a {\em parity-check} matrix of the code) whose kernel defines 
$\C$, namely $\C = \{\xv \in \Ft^n: \Hm \transp{\xv}=\mathbf{0}\}$. The decoding problem is equivalent to find an $\ev$ of Hamming weight $t$ such that
$\Hm \transp{\ev} = \transp{\sv}$ where $\sv$ is the {\em syndrome} of $\yv$ with respect to $\Hm$, \ie $\transp{\sv} = \Hm \transp{\yv}$.
This can be verified by observing that  if there exists $\cv \in \C$ and $\ev$ such that $\yv=\cv+\ev$ then $\Hm \transp{\yv} = \Hm \transp{(\cv+\ev)} = 
\Hm \transp{\cv}+\Hm \transp{\ev}= \Hm \transp{\ev}$.

The decoding problem has been studied for a long time and despite many efforts on this issue \cite{P62,S88,D91,B97b,FS09,BLP11,MMT11,BJMM12,MO15}
the best algorithms \cite{BJMM12,MO15,BM17, BM18} are exponential in the number of errors that have to be corrected:
correcting $t$ errors in a binary linear code of length $n$  with the aforementioned algorithms has a cost of $2^{\alpha n(1+ o(1)) }$ where $\alpha=\alpha(R,\tau)$ is a constant depending of the code rate $R \eqdef \frac{k}{n}$, the error rate 
$\tau \eqdef \frac{t}{n}$ and the algorithm which is used.
All the efforts that have been spent on this problem have only managed to decrease slightly this exponent $\alpha$.
Let us emphasize that this exponent is  the key for estimating the security level of any code-based cryptosystem.
We expect that this problem is the hardest at the Gilbert-Varshamov relative distance  $\tau=\deltagv$ where $\deltagv \eqdef h^{-1}(1-R)$, with $h$ being the binary entropy function $h(x) \eqdef - x \log_2 x -(1-x)\log_2(1-x)$ and $h^{-1}(x)$ its inverse ranging over $[0,\frac{1}{2}]$. This corresponds in the case of random linear codes to the largest relative weight below which there is typically just one solution of the decoding problem assuming that there is one. Above this bound, the number of solutions becomes exponential (at least as long as $\tau  < 1 - \deltagv$) and this helps to devise more efficient decoders. Furthermore, all the aforementioned algorithms become polynomial in the regime
$\frac{1-R}{2} \leq \tau \leq \frac{1+R}{2}$ (see an illustration of this behaviour in Figure \ref{fig:Prange}).
\begin{figure}[h!]
\centering
\includegraphics[height=6cm]{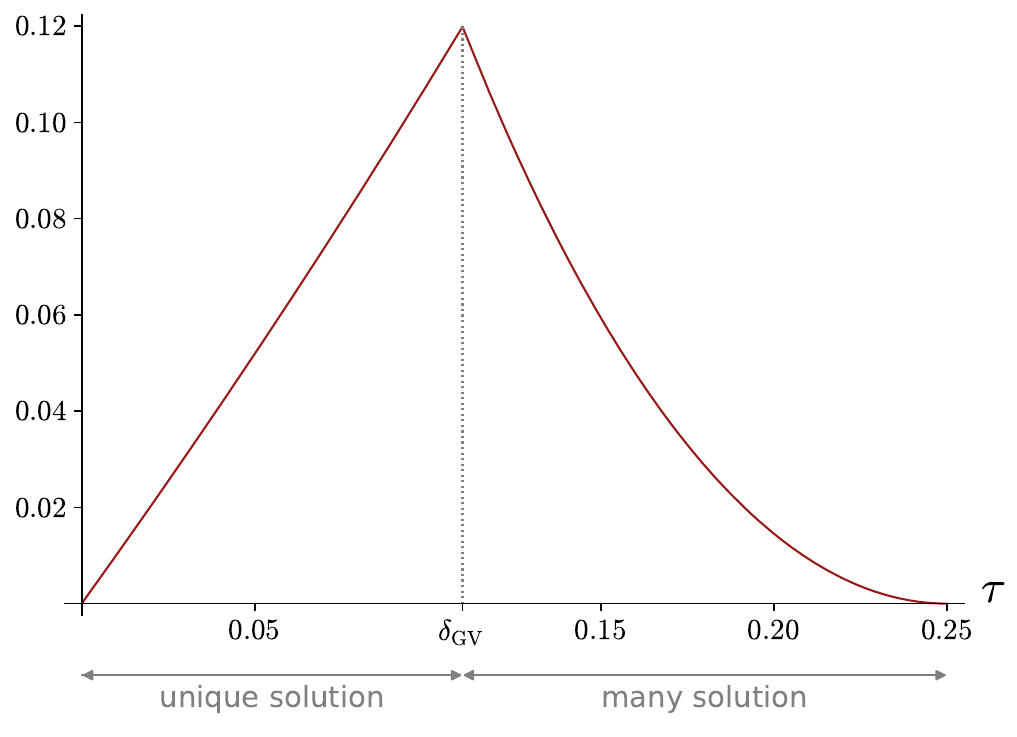}
\caption{\label{fig:Prange} Complexity exponent $\alpha$ 
of the Prange ISD algorithm \cite{P62} as a function of the error ratio $\tau \eqdef \frac{t}{n}$ at rate $R=\frac{1}{2}$. The peak corresponds to the normalized Gilbert-Varshamov distance $\delta_{\text{GV}}=h^{-1}(1-R)$.}
\end{figure}

There are code-based cryptographic primitives whose security relies precisely on the difficulty of decoding at the Gilbert-Varshamov relative distance (something which is also called {\em full distance decoding} \cite{MO15,BM17,BM18}), for instance the Stern code-based identification schemes or associated signatures schemes \cite{S93,GG07,AGS11,FJR21}. In the light of the upcoming NIST second call for new quantum resistant signature algorithms, it is even more important to have a stable and precise assessment of what we may expect about the complexity of solving this problem. For much smaller distances, say sub-linear, which is relevant for cryptosystems like \cite{MTSB13,M78}, the situation seems much more stable/well understood, since the complexity exponent of all the above-mentioned algorithms is the same in this regime \cite{CS16}.

\subsection{ISD Algorithms and Beyond: Statistical Decoding}
All the aforementioned algorithms can be viewed as a refinement of the original Prange algorithm \cite{P62} and are actually all 
referred to as Information Set Decoding (ISD) algorithms. Basically, they all use a common principle, namely making the bet that in a certain set of about $k$ positions (the ``information set'') there are only very few errors and using this bet to speed-up decoding. The parameters of virtually all code-based cryptographic algorithms (for the Hamming metric) have been chosen according to the running time of this family of algorithms. Apart from these algorithms, there is one algorithm which is worth mentioning, namely statistical decoding. 
It was first proposed by Al Jabri in \cite{J01} and improved a little bit by Overbeck in \cite{O06}. Later on,
\cite{FKI07} proposed an iterative version of this algorithm.

 It is essentially a two-stage algorithm, the first step 
consisting in computing an exponentially large number of parity-check equations of the
smallest possible weight $w$, and then from these parity-check equations the error is recovered by some 
kind of majority voting based on these equations. This majority voting is based on the following principle, take a parity-check equation $\hv$ for the code $\CC$ we want to decode, \ie a binary vector $\hv=(h_i)_{1 \leq i  \leq n}$ such that $\scal{\hv}{\cv}=0$ for every $\cv$ in $\CC$.
Assume that the $i$-th bit of the parity-check is $1$, then since $\scal{\hv}{\yv} = \scal{\hv}{\ev} = e_i + \sum_{j \neq i} h_j e_j$, the $i$-th bit $e_i$ of the error $\ev$ we want to recover satisfies
\begin{equation}\label{eq:stat}
e_i + \sum_{j \neq i} h_j e_j =  \scal{\hv}{\yv}.
\end{equation}
The sum $\sum_{j \neq i} h_j e_j$ is biased, say it is equal to $1$ with probability $\frac{1-\varepsilon}{2}$ with a bias $\varepsilon$ which is (essentially) a decreasing function of the weight $w$ of the  parity-check $\hv$. This allows to recover $e_i$ with about 
$\Th{1/\varepsilon^2}$  parity-checks. However the bias is exponentially small in the minimum weight of $\hv$ and $\ev$ and the complexity of such an algorithm is exponential in the codelength. An asymptotic analysis of this algorithm was performed in \cite{DT17} and it turns out that even if we had a way to obtain freely the parity-check equations we need, this kind of algorithm could not even outperform the simplest ISD algorithm: the Prange algorithm. This is done in \cite{DT17} by showing that there is no loss in generality if we just care about getting the best exponent to restrict ourselves to a single parity-check weight $w$ (see Section 5 in \cite{DT17}) and then analyse the complexity of such a putative algorithm for a single weight by using the knowledge of the typical number of parity-check equations of a given weight in a random linear code. The complexity exponent we get is a lower bound on the complexity of statistical decoding. We call such a putative statistical decoding algorithm, {\em genie-aided statistical decoding}: we are assisted by a genie which gives for free all the parity-check equations we require (but of course we can only get as much parity-check equations of some weight $w$ as there exists in the code we want to decode). The analysis of the exponent we obtain with such genie-aided statistical decoding  is given in \cite[\S 7]{DT17} and shows that it is outperformed very significantly by the  Prange algorithm (see \cite[\S 7.2, Fig. 6]{DT17}).

\subsection{Contributions}

In this work, we modify statistical decoding so that each parity-check yields now an LPN sample which is a noisy linear combination involving part of the error vector. This improves significantly statistical decoding, since the new decoding algorithm outperforms significantly all ISD's for code rates smaller than $0.3$. It gives for the first time after $60$ years, a better decoding algorithm that does not belong to the ISD family, and this for a very significant range of rates. The only other example where ISD algorithms have been beaten was in 1986, when Dumer introduced his collision technique. This improved the Prange decoder only for rates in the interval $[0.98,1]$ and interestingly enough it gave birth to all the modern improvements of ISD algorithms starting from Stern's algorithm \cite{S88}.
\newline

{\bf \noindent A New Approach : Using Parity-Checks to Reduce Decoding to LPN.}
Our approach for solving the decoding problem
reduces it to the so-called Learning Parity with Noise Problem (LPN). 
\begin{pb}[\textup{LPN}] \label{prob:LPN} 
	Let $\mathcal{O}_{\vec{s},\tau}(\cdot)$ be an oracle  parametrized by $\vec{s}\in\F_{2}^{s}$ and $\tau \in [0,1]$ such  that on a call it outputs $(\vec{a},\scal{\vec{s}}{\vec{a}} + e)$ where $\vec{a}\in\F_{2}^{s}$ is uniformly distributed and $e$ is distributed according to a Bernoulli of parameter $\tau$. We have access to $\mathcal{O}_{\vec{s},\tau}(\cdot)$ and want to find $\vec{s}$. 
\end{pb}

\eqref{eq:stat} can be interpreted as an LPN sample with an $\vec{s}$ of size $1$, namely $e_i$.
However, if instead of splitting the support of the parity-check with one bit on one side and the other ones on the other side, but choose say $s$ positions on the first part (say the $s$ first ones) and $n-s$ on the other, we can write
\begin{equation*}
\scal{\hv}{\yv} = \underbrace{\sum_{i=1}^s h_i e_i}_{\text{linear comb.}} + \underbrace{\sum_{j>s} h_j e_j}_{\text{LPN noise}}.
\end{equation*}
We may interpret such a scalar product as an LPN sample where the secret is $(e_1,\cdots,e_s)$; \ie we have a noisy information on a  linear combination $\sum_{i=1}^s h_i e_i$ on the $s$ first bits of the error where the noise is given by the term $\sum_{j>s} h_j e_j$ and the information is of the form
$\sum_{i=1}^s h_i e_i + \text{noise} =\scal{\hv}{\yv}$.  Again the second linear combination is biased, say 
$\Prob\left(\sum_{j>s} h_j e_j=1\right)=\frac{1-\varepsilon}{2}$ and information theoretic arguments show that again $\Th{1/\varepsilon^2}$ samples are enough to determine $(e_1,\cdots,e_s)$. It seemed that we gained nothing here since we still need as many samples as before 
and it seems that now recovering $(e_1,\cdots,e_s)$ is much more complicated than performing majority voting.

However with this new approach, we just need parity-check equations of low weight on $n-s$ positions (those that determine the LPN noise) whereas in  statistical decoding algorithm we have to compute parity-check equations of low weight on $n-1$ positions. 
This brings us to the main advantage of our new method: the parity-checks  we produce have much lower weight on those $n-s$ positions than those we produce for statistical decoding. This implies that the bias $\varepsilon$ in the LPN noise is much bigger with the new method and the number $N = \Th{1/\varepsilon^2}$ of parity-check equations much lower. Secondly, by using the fast Fourier transform, we can recover $(e_1, \cdots,e_s)$ in time $\OO{s2^s}$. Therefore, as long as the number of parity-checks we need  is of order $\Om{2^s}$, there is no exponential extra cost of having to recover $(e_1,\cdots,e_s)$. This new approach will be called from now on {\em Reduction to LPN} decoding (RLPN).  
\newline

{\bf \noindent Subset Sum Techniques  and Bet on the Error Distribution.} As just outlined, our RLPN decoder needs an exponential number $N = \Th{1/\varepsilon^2}$ of parity-checks of small weight on $n-s$ positions. 
This can be achieved efficiently by using collision/subset techniques used in the inner loop of ISD's. Recall that all  ISD's  proceed in two steps, $(i)$ first they pick an augmented information set and $(ii)$ then have an inner loop computing low weight codewords of some sort. Step $(ii)$ uses advanced techniques to solve subset-sum problems like birthday paradox \cite{D86,D91}, Wagner algorithm \cite{W02} or representations techniques \cite{MMT11,BJMM12}. All these techniques can also be used in a natural way in our RLPN decoder to compute the low weight parity-checks we need.

Furthermore, another idea of ISD's can be used in our RLPN decoder. All ISD's are making, in a fundamental way, a bet on the error weight distribution  in several zones related to the information set picked up in $(i)$. There are two zones: the potentially augmented information set  and the rest of the positions. ISD algorithms assume that the (augmented) information set contains only very few errors. A similar bet can be made in our case. We have two different zones: on one hand the $s$ positions determining $s$ error bits  and on the other $n-s$ bits which determine the LPN noise. It is clearly favourable to have an error ratio which is smaller on the second part. The probability that this unlikely event happens is largely outweighed by the gain in the bias of the LPN noise. 
\newline

{\bf \noindent Our Results.} Using all the aforementioned ingredients results in dramatically improving statistical decoding (see Figure \ref{fig:compISDRLPN}), especially in the low rate regime ($R  \leq \frac{1}{2}$) where ISD algorithms are known to perform slightly worse than in the 
high rate regime ($R > \frac{1}{2}$). 
Indeed, the complexity exponent $\alpha(R) \eqdef \alpha(R,\deltagv(R))$ of ISD's for full decoding ({\em a.k.a.} the GV bound decoding) which could be expected to be  symmetric in $R$ is actually bigger in the low rate regime than in the high rate regime: $\alpha(R) > \alpha(1-R)$ for $0 < R < \frac{1}{2}$. This results in an exponent curve which is slightly tilted towards the left, the maximum exponent being always obtained for $R < \frac{1}{2}$. Even worse, the behaviour for very small rates (\ie $R \rightarrow 0^+$) is fundamentally different in the very high rate regime ($R \rightarrow 1^-$). The complexity curve behaves like $\alpha(R) \approx R$ in the first case and like $\alpha(R) \approx \frac{1-R}{2}$ in the second (at least for all later improvements of the Prange decoder incorporating  collision techniques). This behaviour at $0$ for full distance decoding has never been changed by {\em any} decoder. It should be noted that $\alpha(R) = R(1+o(1))$ around $0$ means that the complexity behaves like
$2^{\alpha(R)n}=2^{R(1+o(1))n}=2^{k(1+o(1))}$, so in essence ISD's are not performing really better than trivial enumeration on all codewords. This fundamental barrier is still unbroken by our RLPN decoder, but it turns out that $\alpha(R)$ approaches $R$ much more slowly with  RLPN. For instance, for $R=0.02$ we have $\alpha(R) \approx \frac{R}{2}$. This behaviour in the very low regime is instrumental for the improvement we obtain on ISD's. In essence, this improvement is due in this regime to the conjunction of RLPN decoding with a collision search of low weight parity-checks. This method can be viewed as the dual (\ie operating on the dual code) of the collision search  performed in advanced ISD's  which are successful for lowering the complexity exponent down to $\alpha(R) \approx \frac{1-R}{2}$ in the high rate regime.
In some sense, the RLPN strategy allows us to {\em dualize} advanced ISD techniques for working in the low rate regime.

All in all, using \cite{BJMM12} (one of the most advanced ISD techniques)
to compute low weight codewords of some shape we are able to outperform significantly even the latest improvements of ISD algorithms for code rates $R$ smaller than $0.3$ as shown in Figure \ref{fig:compISDRLPN}. This is a breakthrough in this area, given the dominant role that ISD algorithms have played during all those years for assessing the complexity of decoding a linear code. Note however that the correctness of this algorithm relies on the LPN error model (Assumption \ref{ass:LPN}) for which some recent experiments have found out not to be completely accurate 
(see \url{https://github.com/tillich/RLPNdecoding/tree/master/verification_heuristic/histogram}). However, experimental results seem to indicate that this LPN modelling can be replaced by the weaker Conjecture \ref{con:main} which is compatible with the experiments we have made and for which there is a clear path to demonstrate its validity (see Subsection \ref{ss:assumption}).

\begin{figure}[h!]
	\centering
	\includegraphics[height=6.5cm]{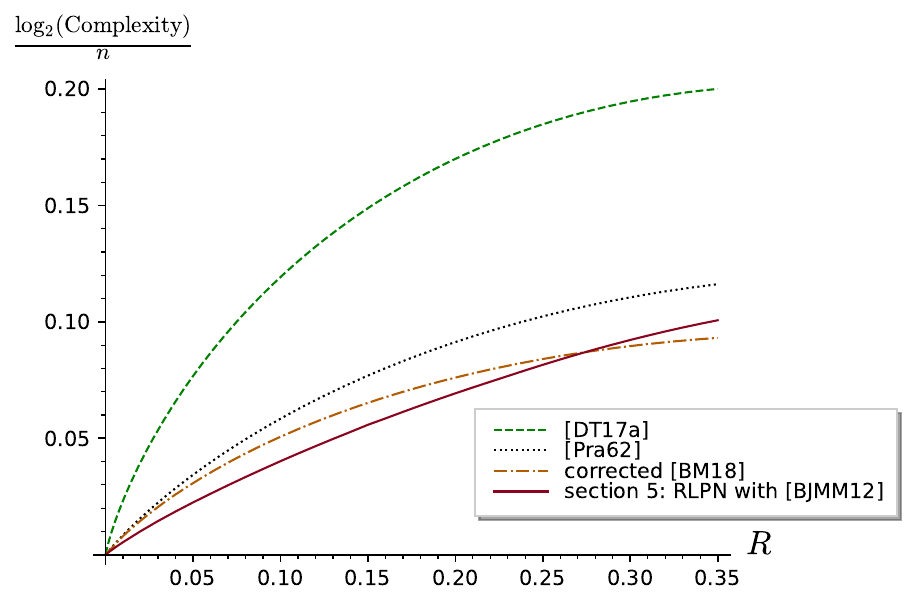}
	\caption{\label{fig:compISDRLPN} Complexity exponent for full distance decoding of genie-aided statistical decoding \cite[\S 7]{DT17} (recall that this is  a {\em lower bound} on the complexity exponent of statistical decoding), the basic Prange ISD algorithm \cite{P62}, the best state-of-the-art algorithm of \cite{BM18} (with a correction in the exponent that we give here, see Appendix \ref{sec:BM18}) and our RLPN decoder as a function of $R$. 
}
\end{figure}

\medskip
{\bf \noindent Proving the Standard Assumption of Statistical Decoding.} In analysing the new decoding algorithm, we also put statistical decoding on a much more rigorous foundation. We show that the basic condition that has to be met  for both statistical decoding and RLPN decoding, namely that the number $N$ of parity-check equations that are available is at least of order $\Om{1/\varepsilon^2}$ in the case of statistical decoding and $\Om{s/\varepsilon^2}$ in the case of RLPN decoding
where $\varepsilon$ is the bias of the LPN noise, is also essentially the condition which ensures that the bias is well approximated by the standard assumption made for statistical decoding which assumes that 
\begin{equation}
\label{eq:standardassumption}
\bias\left( \langle \vec{e}_{\sN},  \vec{h}_{\sN}\rangle \right)
	 \approx \bias\left(\langle \vec{e}_{\sN},  \vec{h'}_{\sN}\rangle \right),
	 \end{equation}
where $\bias(X)$ is defined for a binary random variable as $\bias(X) \eqdef \Prob(X=0)-\Prob(X=1)$, $\sN$ is a subset of $n-s$ positions (those which are involved in the LPN noise), $\hv$ is chosen uniformly at random
among the parity-checks of weight $w$ on $\sN$ of the code $\CC$ we decode whereas $\hv'$ is chosen uniformly at random among the words of weight $w$ on $\sN$. We will namely prove that as soon as the parameters are chosen such that 
 $N = \omega\left(1/ \bias\left(\langle \vec{e}_{\sN},  \vec{h'}_{\sN}\rangle \right)^2 \right)$, we have that for all but a proportion $o(1)$ of codes $\CC$ (as proved in Proposition  \ref{prop:bias} in Subsection \ref{ss:reductiontoLPN})
$$
\bias\left( \langle \vec{e}_{\sN},  \vec{h}_{\sN}\rangle \right) = (1+o(1)) \bias\left(\langle \vec{e}_{\sN},  \vec{h'}_{\sN}\rangle \right).
$$

 \section{Notation and Coding Theory Background}
\label{sec:notation}

In this section, we introduce notation and coding theoretic background  which are used throughout the paper.

\noindent
\paragraph{Vectors and matrices.}
Vectors and  matrices are respectively denoted in bold letters 
and bold capital letters such as $\av$ and $\Am$. The entry at index $i$ of the vector $\xv$ is denoted by $x_i$. The canonical scalar product $\sum_{i=1}^n x_i y_i$ between two vectors $\xv$ and $\yv$ of $\Ft^n$ is denoted by $\scal{\xv}{\yv}$. 
Let  $\sI$ be a list of indexes. We denote by $\xv_{\sI}$ the vector $(x_i)_{i \in \sI}$. In the same way, we denote by $\Mm_{\sI}$ the sub-matrix made of the columns of $\Mm$ which are indexed by $\sI$.
The concatenation of two vectors $\xv$ and $\yv$ is denoted by $\xv || \yv$. The Hamming weight of a vector $\vec{x}\in \F_{2}^{n}$ is defined as the number of its non-zero coordinates, namely
$
|\vec{x}| \eqdef \# \left\{ i \in \llbracket 1,n \rrbracket \mbox{ : } x_{i} \neq 0 \right\}
$
where $\#\sA$ stands for the cardinality of a finite set $\sA$ and $\IInt{a}{b}$ stands for the set of the integers between $a$ and $b$.
\newline 

\noindent
{\em Probabilistic notation.}
For a finite set $\sS$, we write $X \stackrel{\$}{\gets} \sS$ when $X$ is an element of $\sS$ drawn uniformly at random in it.
 For a Bernoulli random variable $X$, denote by $\bias(X)$ the quantity 
	$\bias(X) \eqdef \Prob(X=0)-\Prob(X=1)$. For a Bernoulli random variable $X$ of parameter $p=\frac{1-\varepsilon}{2}$, \ie $\Prob(X=1)=\frac{1- \varepsilon}{2}$, we have $\bias(X)=\varepsilon$. 
\newline 

\noindent
{\em Soft-O notation.}
For real valued functions defined over $\RR$ or $\NN$ we define $o()$, $\OO{}$, $\Om{}$, $\Th{}$, in the usual way and also use the less common notation $\OOt{}$ and $\Omt{}$, where 
$f = \OOt{g}$ means that $f(x) = \OO{g(x) \log^kg(x)}$ and $f=\Omt{g}$ means that $f(x)=\Om{g(x)\log^k g(x)}$ for some $k$. We will use this 
for functions which have an exponential behaviour, say $g(x)=e^{\alpha x}$, in which case $f(x)=\OOt{g(x)}$ means that 
$f(x)=\OO{P(x)g(x)}$ where $P$ is a polynomial in $x$.
We also use $f = \omega(g)$ when $f$ dominates $g$ asymptotically; that is when $\mathop{\lim}\limits_{x \rightarrow \infty} \frac{|f(x)|}{g(x)} = \infty$.
 
\noindent
\paragraph{Coding theory.} A binary linear code $\C$ of length $n$ and dimension $k$ is a subspace of the vector space $\Ft^n$ of dimension $k$. We say that it has parameters $[n, k]$ or that it is an $[n, k]$-code. Its {\em rate} $R$ is defined as $R \eqdef \frac{k}{n}$. 
A generator matrix $\Gm$ for $\C$ is a full rank $k \times n$ matrix over $\Ft$ such that
\begin{equation*}
\C =\left\{\uv \Gm:\uv \in \F_{2}^k \right\}.
\end{equation*}
In other words, the rows of $\Gm$ form a basis of $\C$. A parity-check matrix $\Hm$ for $\C$ is a full-rank $(n-k)\times n$ matrix over $\Ft$ such that
\begin{equation*}
\C = \left\{\cv \in \F_{2}^n: \Hm \transp{\cv} = \mathbf{0} \right\}.
\end{equation*}
In other words, $\C$ is the null space of $\Hm$.
The code whose generator matrix is the parity-check matrix of $\CC$ is called the dual code of $\CC$. It might be seen as the subspace of parity-checks of $\CC$ and is defined equivalently as
\begin{definition}[dual code]
The {\em dual code} $\CC^\perp$ of an $[n,k]$-code $\CC$ is an $[n,n-k]$-code which is defined by
$$
\CC^\perp \eqdef \left\{\vec{h} \in \Ft^n: \forall \cv \in \CC,\; \scal{\cv}{\vec{h}}=0 \right\}.
$$
\end{definition}
It will also be very convenient to consider the operation of puncturing a code, \ie keeping only a subset of entries in a codeword.
\begin{definition}[punctured code]
For a code $\CC$ and a subset $\sI$ of code positions, we denote by $\CC_\sI$ the punctured code obtained from $\CC$ by keeping only the positions in $\sI$, \ie
$$
\CC_{\sI}= \{\cv_\sI: \cv \in \CC\}.
$$
\end{definition}

We will also use several times that random binary linear codes can be decoded successfully, with a probability of error going to $0$, as the codelength goes to infinity as long as the code rate is below the capacity, and this of any binary input symmetric channel whose definition 
is 
\begin{definition}[binary input memoryless symmetric channel]
A binary input memoryless symmetric channel (BIMS) with output a finite alphabet $\sY$, is an error model on $\{0,1\}^*$  assuming that when a bit $b \in\{0,1\}$ is sent, it  gets mapped to $y \in \sY$ with probability denoted by $p(y|b)$ (these are  the transition probabilities of the channel). Being symmetric means that there is an involution $f$ such that 
$p(y|0)=p(f(y)|1)$.  Being memoryless means that the outputs of the channel are independent conditioned on the inputs: when $b_1\cdots b_n \in \{0,1\}^n$ is sent, the probability that the output is $y_1 \cdots y_n$ is given by $p(y_1|b_1) \cdots p(y_n|b_n)$.
\end{definition}
{We use here this rather general formulation to analyse what is going on when we have several different LPN samples corresponding to the same parity-check $\hv$. The error model that we have in this case will be more complicated than the standard binary symmetric channel (see Definition \ref{def:BSC} below).}
The capacity of such a channel is given by
\begin{definition}[capacity of a BIMS channel]
The capacity\footnote{The formula given here is strictly speaking the symmetric capacity of a channel, but these two notions coincide in the case of a BIMS channel.} $C$ of a BIMS channel with transition probabilities $(p(y|b))_{\substack{y \in \sY \\ b \in \{0,1\}}}$ is given by
$$
C \eqdef \sum_{y \in \sY}\sum_{b \in \{0,1\}} \frac{p(y|b)}{2} \log_2 \frac{p(y|b)}{\frac{1}{2} p(y|0)+\frac{1}{2}p(y|1)}.
$$
\end{definition}

{LPN samples correspond to  the binary symmetric channel (BSC) given by}
\begin{definition}[binary symmetric channel]
BSC$(p)$ is a BIMS channel with output alphabet $\sY=\{0,1\}$ and transition probabilities 
$p(0|0)=p(1|1)=1-p$, $p(1|0)=p(0|1)=p$, where $p$ is the crossover probability of the channel.
\end{definition}

In other words, this means that a bit $b$ is transformed into its opposite $1-b$ with probability $p$ when sent through the channel. It is readily verified that

\begin{definition}[binary symmetric channel]\label{def:BSC}
The capacity $C$ of BSC$(p)$ is given by 
$C=1-h(p)$.
\end{definition}

We will also talk about maximum likelihood decoding a code (under the assumption that the input codeword is chosen uniformly at random) for a given channel, meaning the following
\begin{definition}[maximum likelihood decoding]
Maximum likelihood decoding of a binary code $\CC \subset \{0,1\}^n$ over a BIMS channel with transitions probabilities $(p(y|b))_{\substack{y \in \sY \\ b \in \{0,1\}}}$ 
corresponds, given a received word $\yv \in \sY^n$, to output the (or one of them if there are several equally likely candidates) codeword $\xv$ 
which maximizes $p(\yv|\xv)$. Here
$p(\yv|\xv)\eqdef p(y_i|x_i)\cdots p(y_n|x_n)$ denotes the probability of receiving $\yv$ given that 
$\xv$ was sent. 
\end{definition}
In a sense, this is the best possible decoding algorithm for a given channel model. 
There is a variation of Shannon's theorem (see for instance \cite[Th. 4.68 p. 203]{RU08}) which says that a family of random binary linear codes $(\CC_n)_{n}$ attain the capacity of a BIMS channel.
\begin{theorem}\label{th:caplinearcodes}
Consider a BIMS channel of capacity $C$.
Let $\delta >0$ and consider a family of random binary linear codes $\CC_n$ of length $n$ and rate smaller than $(1-\delta) C$ obtained by choosing their generator matrix uniformly at random. Then under maximum likelihood decoding, the probability of error after decoding goes to $0$ as $n$ tends to infinity. 
\end{theorem} 
 \section{Reduction to LPN and the Associated Algorithm}\label{sec:algo}

The purpose of this section is $(i)$ to explain in detail the reduction to LPN, $(ii)$ to give a high level description of the algorithm which does not specify the method for finding the dual codewords we need, and then $(iii)$ to give its complexity. We assume from now on that we are given  $\yv$ which is equal to a sum of a codeword $\cv$ of the code $\C$ we want to decode plus an error vector $\ev$ of Hamming weight $t$:
$$
\yv = \cv + \ev, \;\;\cv \in \C,\;|\ev|=t.
$$
We will start this section by explaining how we reduce decoding to an LPN problem and also show how the LPN noise can be estimated accurately.

\subsection{Reduction to LPN}
\label{ss:reductiontoLPN}
Recall that in RLPN decoding we first randomly select a subset $\sP$ of $s$ positions
	$$
	\sP \subseteq \llbracket 1,n \rrbracket \quad \mbox{such that} \quad \#\sP= s
	$$
	where $s$ is a parameter that will be chosen later. $\sP$ corresponds to the entries of $\ev$ we aim to recover and is the secret  in the LPN problem. We denote by $\sN \eqdef \llbracket 1,n \rrbracket \setminus \sP$ the complementary set, with a choice of the letter $\sN$ standing for ``noise'' for reasons that will be clear soon. Given $\hv \in\dual{\CC}$, we compute,
	\begin{align*} 
	\langle \vec{y}, \vec{h} \rangle &= \langle \vec{e}, \vec{h} \rangle  \\ 
	&= \sum_{j \in \sP} h_{j}e_{j} + \sum_{j \in \sN} h_{j} e_{j} \\
	&= \langle \vec{e}_{\sP}, \vec{h}_{\sP} \rangle + \langle \vec{e}_{\sN}, \vec{h}_{\sN} \rangle 
	\end{align*} 
	It gives access to the following LPN sample:
	$$
\left( \vec{a},\langle \vec{s},\vec{a} \rangle + e  \right) \quad \mbox{where} \quad \left\{
	\begin{array}{l}
		\vec{s} \eqdef \vec{e}_{\sP}  \\
		\vec{a} \eqdef \vec{h}_{\sP} \\
		e \eqdef \langle \vec{e}_{\sN},  \vec{h}_{\sN} \rangle 
	\end{array}
	\right.
	$$
	Here $e$ follows a Bernoulli distribution that is a function of $n$, $s$ and $u$ ({\em resp. $w$}) the weight of $\vec{e}$ ({\em resp.} $\vec{h}$) restricted to $\sN$, namely	
$$
	u \eqdef |\vec{e}_{\sN}| \quad \mbox{and} \quad w \eqdef \left| \vec{h}_{\sN} \right|.
	$$ 
The probability that $e$ is equal to $1$ is estimated through the following proposition which gives for the first time a rigorous statement for  the standard assumption \eqref{eq:standardassumption} made for statistical decoding.
	 \begin{proposition}\label{prop:bias}
	 Assume that the code $\CC$ is chosen by picking for it an $(n-k) \times n$ binary parity-check matrix uniformly at random. 
	 Let $\sN$ be a fixed set of $n-s$ positions in $\IInt{1}{n}$ and $\ev$ be some error of weight $u$ on $\sN$. Choose $\hv$ uniformly at random among the parity-checks of $\CC$ of weight $w$ on $\sN$ and $\hv'$ uniformly at random among the words of weight 
	 $w$ on $\sN$. Let $\delta \eqdef \bias\left( \scal{\ev}{\hv'} \right)$. If the parameters $k$, $s$, $u$, $w$ are chosen as functions on $n$ so that  for $n$ going to infinity, the expected number $N$ of parity-checks of $\CC$ of weight $w$ on $\sN$ satisfies 
	 $N = \omega\left( 1 / \delta^2 \right)$ then for all but a proportion $o(1)$ of codes we have
	 $$
	 \bias\left( \langle \vec{e}_{\sN},  \vec{h}_{\sN}\rangle \right) = (1+o(1)) \delta.
	 $$
	 \end{proposition}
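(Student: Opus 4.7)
The plan is a second-moment/Chebyshev concentration argument over the random draw of $\CC$. Fix $\sN$ and $\ev$ and let $W := \{\hv \in \F_2^n : |\hv_\sN|=w\}$ and $W_b := \{\hv \in W : \scal{\ev}{\hv}=b\}$ for $b \in \{0,1\}$, so that $\delta = (|W_0|-|W_1|)/|W|$ by the very definition of $\hv'$. For the random code $\CC$, set $N_b(\CC) := |W_b \cap \CC^\perp|$; the quantity we wish to approximate is the code-dependent bias
\[
\bias\!\left(\scal{\ev}{\hv}\right) = \frac{N_0(\CC) - N_1(\CC)}{N_0(\CC) + N_1(\CC)}.
\]
It will suffice to show that both numerator and denominator concentrate around their expectations with probability $1-o(1)$ over the choice of $\CC$.

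For the first moment, every nonzero $\hv \in \F_2^n$ lies in $\CC^\perp$ with probability $(1+o(1))\,2^{-k}$ under the random parity-check model, so by linearity $\expect{}{N_0+N_1}=(1+o(1))\,N$ (this is exactly the quantity in the hypothesis) and $\expect{}{N_0-N_1}=(1+o(1))\,N\delta$. For the second moment, the key observation is that any two distinct nonzero vectors of $\F_2^n$ are linearly independent, which makes the indicators $\mathbf{1}_{\hv\in\CC^\perp}$ pairwise independent as $\hv$ varies over $W$. Introducing the signs $(-1)^{\scal{\ev}{\hv}}$ leaves variances and covariances unchanged, so using the Bernoulli bound $p(1-p)\leq p$ one obtains
\[
\var{N_0+N_1} \leq (1+o(1))\,N, \qquad \var{N_0-N_1} \leq (1+o(1))\,N.
\]

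Plugging into Chebyshev then gives, for any $\alpha>0$,
\[
\Prob\!\left( |N_0+N_1 - N| > \alpha N \right) = \OO{\tfrac{1}{\alpha^2 N}}, \qquad \Prob\!\left( |(N_0-N_1) - N\delta| > \alpha N|\delta| \right) = \OO{\tfrac{1}{\alpha^2 N \delta^2}}.
\]
Under the hypothesis $N = \omega(1/\delta^2)$, both bounds tend to $0$ for every fixed $\alpha$, and a standard diagonal argument lets us pick $\alpha = \alpha_n \to 0$ slowly enough that they remain $o(1)$. On the intersection of the two good events, which covers a proportion $1-o(1)$ of codes, one has simultaneously $N_0+N_1 = (1+o(1))\,N$ and $N_0-N_1 = (1+o(1))\,N\delta$, so the ratio equals $(1+o(1))\,\delta$ as claimed; note that $N_0+N_1>0$ is automatic here since $N\to\infty$. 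The only delicate step is the pairwise independence of the indicators, which is where the uniformity of the random parity-check matrix is essential; everything else is linearity of expectation and an elementary application of Chebyshev.
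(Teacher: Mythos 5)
Your overall strategy---estimating the first two moments of the counts of parity-checks in each class and applying Chebyshev---is the same as the paper's; the paper works with $E_b \eqdef \#\{\hv\in\CC^\perp : |\hv_\sN|=w,\ \scal{\ev}{\hv}=b\}$ directly, whereas you form $N_0\pm N_1$, which is a cosmetic difference.

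The genuine gap is the claimed pairwise independence of the indicators $\mathbf{1}_{\hv\in\CC^\perp}$. In the stated model $\CC^\perp$ is the \emph{row space} of a uniformly random $(n-k)\times n$ matrix $\Hm$, not a kernel, and membership indicators in a random row space are not pairwise independent even for linearly independent vectors. Take $n=2$, $k=1$, $\hv_1=(1,0)$, $\hv_2=(0,1)$: each lies in the row space of a uniform $1\times 2$ matrix with probability $1/4$, but both can never lie in the row space of a single row, so $\Prob\left(\hv_1\in\CC^\perp,\ \hv_2\in\CC^\perp\right)=0\neq 1/16$. The covariance is nonzero (negative). This also undermines your remark that ``introducing the signs $(-1)^{\scal{\ev}{\hv}}$ leaves variances and covariances unchanged'': a sign flip can change the sign of a nonzero covariance, so once independence is lost the sign pattern in $N_0-N_1$ has to be taken into account.

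The paper sidesteps all of this by invoking Barg's bound (their reference \cite{B97b}, Lemma 1.1), which gives in the parity-check model the exact first moment $\esp{E_b}=E'_b/2^k$ and, crucially, the variance bound $\var{E_b}\le E'_b/2^k$; that lemma is precisely what replaces the (false) pairwise independence you assume. It happens that a direct computation shows the pairwise covariances are of order $-2^{-(n+k)}$, so your asserted bounds $\var{N_0\pm N_1}\le (1+o(1))N$ do hold---but not for the reason you give. To repair the argument you must either carry out that covariance computation, cite a result such as Barg's lemma, or switch to the model where $\CC^\perp$ is the kernel of a uniform $k\times n$ matrix (where pairwise independence genuinely holds) and then argue the two models are asymptotically interchangeable.
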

	 \begin{proof}
	 Let us define for $b \in \{0,1\}$:
	 \begin{eqnarray}
	 E_b \eqdef \#\{\hv \in  \CC^\perp:\; \;\left|\hv_{\sN}\right| = w, \;\scal{\ev_{\sN}}{\hv_{\sN}}=b\} \label{eq:Eb}\\
	 E'_b \eqdef \#\{\hv' \in  \Ft^n:\; \;\left|\hv'_\sN\right| = w, \;\scal{\ev_{\sN}}{\hv'_{\sN}}=b\} \label{eq:Epb}
	 \end{eqnarray}
	 By using \cite[Lemma 1.1 p.10]{B97b}\footnote{{Note that there is an additional condition ``Suppose $Lq^{-r}$ grows exponentially in $n$'' in the statement of this lemma, but it is readily seen that this condition is neither necessary nor used in the proof.}}, we obtain
	 \begin{eqnarray}
	 \mathbb{E}(E_b) & = & \frac{E'_b}{2^k}\\
	 \var{E_b} & \leq  &  \frac{E'_b}{2^k}
	 \end{eqnarray} 
	 By using now the Bienaym\'e-Tchebychev inequality, we obtain for any function $f$ mapping the positive integers to positive real numbers:
	 \begin{equation}
	 \prob{\CC}{\left|E_b- \mathbb{E}(E_b)\right|\geq  \sqrt{f(n) \mathbb{E}(E_b)}}  \leq  \frac{1}{f(n)}.
	 \end{equation}
	 Since $ \bias\left( \langle \vec{e}_{\sN},  \vec{h}_{\sN}\rangle \right)= \frac{E_0-E_1}{E_0+E_1}$ we have with probability greater than
	 $1 - \frac{2}{f(n)}$ that
	 \begin{equation}\label{eq:complicated}
	 \frac{\mu_0-\mu_1 - \sqrt{2f(n)} \sqrt{ \mu_0+\mu_1} }{\mu_0+\mu_1 + \sqrt{2f(n)} \sqrt{ \mu_0+\mu_1}}	 \leq \bias\left( \langle \vec{e}_{\sN},  \vec{h}_{\sN}\rangle \right) \leq  
	\frac{\mu_0-\mu_1 + \sqrt{2f(n)}\sqrt{ \mu_0+\mu_1} }{\mu_0+\mu_1 - \sqrt{2f(n)}\sqrt{ \mu_0+\mu_1}}	 \end{equation}
	 where $\mu_i \eqdef \mathbb{E}(E_i)$ and where we used that for all positive $x$ and $y$, $\sqrt{x}+\sqrt{y} \leq \sqrt{2(x+y)}$. We let 
	 $f(n) = \delta\sqrt{N}/2$. Since $N= \mu_0+\mu_1$ this implies $f(n) = \delta\sqrt{\mu_0+\mu_1}/2$. By the assumptions made in the proposition, note that $f(n)$ tends to infinity as 
	 $n$ tends to infinity.  We notice that 
	 \begin{eqnarray}
	 \sqrt{2f(n)} \sqrt{ \mu_0+\mu_1} &= & \delta^{1/2}(\mu_0+\mu_1)^{3/4} \nonumber \\
	 &= & o\left( \delta(\mu_0+\mu_1)\right) \label{eq:smallo}
	 \end{eqnarray}
	 because 
	 \begin{eqnarray*}
	 \frac{\delta^{1/2}(\mu_0+\mu_1)^{3/4}}{\delta(\mu_0+\mu_1)} 
	 &= & 	\frac{1}{\sqrt{\delta\sqrt{\mu_0+\mu_1}}} \\
	 & = & \frac{1}{\sqrt{2f(n)}} \\
	 & \rightarrow & 0 \text{ as $n \rightarrow \infty$}.
	 \end{eqnarray*}
	 Equation \eqref{eq:complicated} can now be rewritten as
	 \begin{equation}
	 \label{eq:simpler}
	  \frac{\mu_0-\mu_1 - o\left(\delta(\mu_0+\mu_1)\right) }{\mu_0+\mu_1 + o\left(\delta(\mu_0+\mu_1)\right) }	 \leq \bias\left( \langle \vec{e}_{\sN},  \vec{h}_{\sN}\rangle \right) \leq  
	\frac{\mu_0-\mu_1 + o\left(\delta(\mu_0+\mu_1)\right)  }{\mu_0+\mu_1 - o\left(\delta(\mu_0+\mu_1)\right)}	
	 \end{equation}
 Now, on the other hand
	 \begin{eqnarray*}
	 \delta & = & \bias\left( \scal{\ev_{\sN}}{\hv'_{\sN}} \right) = \frac{E'_0-E'_1}{E'_0+E'_1} = \frac{\frac{E'_0}{2^k}-\frac{E'_1}{2^k}}{\frac{E'_0}{2^k}+\frac{E'_1}{2^k}} \\
	 & = & \frac{\mu_0-\mu_1}{\mu_0+\mu_1} \;\;\text{(by \eqref{eq:Eb})}.
	 \end{eqnarray*}
	 From this it follows that we can rewrite \eqref{eq:simpler} as
	 \begin{equation}
	 \frac{\delta}{1+o(\delta)}- o(\delta) \leq \bias\left( \langle \vec{e}_{\sN},  \vec{h}_{\sN}\rangle \right)  \leq 
	  \frac{\delta}{1-o(\delta)}+ o(\delta) 
	 \end{equation}
	 from which it follows immediately that 
	 $
	 \bias\left( \langle \vec{e}_{\sN},  \vec{h}_{\sN}\rangle \right) = \delta(1+o(1)).
	 $
	 \end{proof}
	 \begin{remark}
	 Note that the condition $N = \Om{1/\delta^2}$, respectively $N = \Om{s/\delta^2}$ is the condition we need in order that statistical decoding, respectively RLPN decoding succeed. This means that if we just have slightly more equations than the ratio $\frac{1}{\delta^2}$, then the standard assumption  \eqref{eq:standardassumption} made for statistical decoding holds. The point of this assumption is that it allows easily to estimate 
	 the bias as the following lemma shows.
	 \end{remark}

	\begin{lemma}\label{lem:bias} 
		Under the same assumptions made in Proposition \ref{prop:bias} we have that for all but a proportion $o(1)$ of codes,
		$$
		\bias(\langle \vec{e}_{\sN},  \vec{h}_{\sN}\rangle) = \delta(1+o(1))\quad \mbox{with} \quad \delta \eqdef  \frac{ K_{w}^{n-s}(u)}{\binom{n-s}{w}}
		$$
		where $u \eqdef |\ev_{\sN}|$ and $K_w^{n}$ stands for  the Krawtchouk polynomial of order $n$ and degree $w\in \llbracket 0,n\rrbracket$ which is defined as:
	$$
	K_{w}^{n}(X) \eqdef \sum_{j=0}^{w} (-1)^j \binom{X}{j} \binom{n-X}{w-j}.
	$$ 
	\end{lemma}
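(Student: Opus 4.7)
The plan is to reduce the lemma to Proposition \ref{prop:bias} via a simple padding trick, and then to evaluate the resulting $\delta$ by a classical Krawtchouk polynomial identity.

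First, I would introduce the auxiliary error vector $\tilde{\ev} \in \Ft^n$ defined by $\tilde{\ev}_{\sN} \eqdef \ev_{\sN}$ and $\tilde{\ev}_{\sP} \eqdef \mathbf{0}$. Since $\tilde{\ev}$ still has weight $u$ on $\sN$, it satisfies the hypotheses of Proposition \ref{prop:bias}. Applying that proposition with $\tilde{\ev}$ in place of $\ev$ gives, for all but a proportion $o(1)$ of codes,
$$
\bias\left(\scal{\tilde{\ev}}{\hv}\right) = (1+o(1))\, \bias\left(\scal{\tilde{\ev}}{\hv'}\right).
$$
Because $\tilde{\ev}_{\sP} = \mathbf{0}$, each scalar product collapses onto its $\sN$-restriction: $\scal{\tilde{\ev}}{\hv} = \scal{\ev_{\sN}}{\hv_{\sN}}$ and $\scal{\tilde{\ev}}{\hv'} = \scal{\ev_{\sN}}{\hv'_{\sN}}$. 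Hence the left-hand side is precisely the quantity appearing in the lemma, and it remains to evaluate $\delta \eqdef \bias(\scal{\ev_{\sN}}{\hv'_{\sN}})$.

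By the distribution of $\hv'$, the restriction $\hv'_{\sN}$ is uniform among weight-$w$ binary vectors of length $n-s$. Conditioning on the number $j$ of support positions of $\hv'_{\sN}$ that fall in the support of $\ev_{\sN}$ (of size $u$), we have $\scal{\ev_{\sN}}{\hv'_{\sN}} = j \bmod 2$, and therefore
$$
\Prob\left(\scal{\ev_{\sN}}{\hv'_{\sN}} = b\right) = \frac{1}{\binom{n-s}{w}} \sum_{\substack{0 \leq j \leq w \\ j \equiv b \pmod{2}}} \binom{u}{j}\binom{n-s-u}{w-j}.
$$
Taking the difference of the $b=0$ and $b=1$ cases produces the alternating sum
$$
\delta = \frac{1}{\binom{n-s}{w}} \sum_{j=0}^{w} (-1)^j \binom{u}{j}\binom{n-s-u}{w-j} = \frac{K_w^{n-s}(u)}{\binom{n-s}{w}},
$$
by the definition of the Krawtchouk polynomial, which closes the argument.

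There is no serious obstacle in this proof: the probabilistic content has already been absorbed into Proposition \ref{prop:bias}. The only observation required is that $\scal{\ev_{\sN}}{\hv_{\sN}}$ depends solely on the $\sN$-restrictions, so zeroing $\ev$ on $\sP$ does not alter the left-hand side while it simultaneously forces $\delta$ to collapse onto the clean Krawtchouk expression claimed in the statement.
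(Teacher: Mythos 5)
Your proof is correct and follows essentially the same route as the paper: apply Proposition~\ref{prop:bias} to the $\sN$-restricted scalar product, then evaluate $\delta$ by conditioning on the overlap between the supports of $\hv'_{\sN}$ and $\ev_{\sN}$ to obtain the alternating Krawtchouk sum. The only (welcome) cosmetic difference is that you make the padding $\tilde{\ev}_{\sP}=\mathbf{0}$ explicit, whereas the paper silently substitutes $\scal{\ev_{\sN}}{\hv_{\sN}}$ for $\scal{\ev}{\hv}$ when invoking the proposition.
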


	\begin{proof} By using Proposition \ref{prop:bias} (and the same notation as the one used there) we have that  for all but a proportion $o(1)$ of codes
	$$
	\bias\left( \langle \vec{e}_{\sN},  \vec{h}_{\sN}\rangle\right) = (1+o(1)) \bias\left( \langle \vec{e}_{\sN},  \vec{h}'_{\sN}\rangle\right).
	$$
	Now by definition of $u$ we have
		\begin{align*}
			\bias\left( \langle \vec{e}_{\sN},  \vec{h}'_{\sN}\rangle\right) &= 
			\frac{1}{\binom{n-s}{w}} \sum_{j \text{ even}} \binom{u}{j}\binom{n-s-u}{w-j} 
- \frac{1}{\binom{n-s}{w}} \sum_{j \text{ odd}} \binom{u}{j}\binom{n-s-u}{w-j} \\
&= \frac{1}{\binom{n-s}{w}} \sum_{j} (-1)^{j} \binom{u}{j}\binom{n-s-u}{w-j}\\
&=  \frac{ K_{w}^{n-s}(u)}{\binom{n-s}{w}}.
		\end{align*}
		
\end{proof}

We will now repeatedly denote by {\em bias} of the LPN sample the quantity $\varepsilon$ appearing in the previous lemma and the {\em estimated bias} the quantity namely
\begin{definition}[bias of the LPN samples]
The {\em bias} $\varepsilon$ of the LPN samples is defined by
$$
\varepsilon \eqdef \bias(\scal{\ev_{\sN}}{\hv_{\sN}}) 
$$
when $\vec{e}_{\sN}$ has Hamming weight $u$ and $\hv$ is drawn uniformly at random among the parity-check equations of weight $w$ restricted on $\sN$. The {\em estimated bias} is the quantity $\delta$ defined by
$$
\delta \eqdef \bias(\scal{\ev_{\sN}}{\hv'_{\sN}})
$$ 
when $\vec{e}_{\sN}$ has Hamming weight $u$ and $\hv'$ is drawn uniformly at random among the binary words of weight $w$ restricted on $\sN$. This quantity is equal to
$$
\delta = \frac{ K_{w}^{n-s}(u)}{\binom{n-s}{w}}.
$$
\end{definition}

The point of introducing Krawtchouk polynomials is that we can bring in asymptotic expansions of Krawtchouk polynomials. Most of the relevant properties we need about Krawtchouk polynomials are given in \cite[\S II.B]{KS21}. They can be summarized by

\begin{proposition}\label{prop:expansion}\mbox{ }
\begin{enumerate}
\item {\em Value at 0.} For all $0 \leq w \leq n$, $K_w^n(0)=\binom{n}{w}$.
\item {\em Reciprocity.} For all $0 \leq t,w \leq n$, $\binom{n}{t}K_w^n(t)=\binom{n}{w}K_t^n(w)$.
\item {\em Roots.} The polynomials $K_w^n$ have $w$ distinct roots which lie in the interval 
$$\Iint{n/2-\sqrt{w(n-w)}}{n/2+\sqrt{w(n-w)}}.$$ The distance between roots is at least $2$ and at most $o(n)$.
\item {\em Magnitude outside the root region.} We set $\tau \eqdef \frac{t}{n}$, $\omega \eqdef \frac{w}{n}$.
We assume $w \leq n/2$ and $t \leq n/2-\sqrt{w(n-w)}$. Let $z \eqdef \frac{1-2\tau  - \sqrt{D}}{2 (1-\omega)}$ where
	 $D \eqdef \left(1- 2 \tau \right)^2-4 \omega(1-\omega)$.
	 We have
	 \begin{equation}			\label{eq:realexplicit}
	 K^n_w(t)= 2^{n \left(\tau \log_2(1-z)+ (1-\tau) \log_2(1+z) - \omega\log_2 z +o(1)\right)}.
	 \end{equation}
	 \item {\em Magnitude in the root region.}
	Between any two consecutive roots of $K^n_{w}$, where $1 \leq w \leq \frac{n}{2}$, there exists $t$ such that:
		\begin{equation}\label{compexplicit}
		K_w^n(t) = 2^{n \left( \frac{1+h(\omega)-h(\tau)}{2}+o(1)\right)}
\;\;\text{where $\omega \eqdef \frac{w}{n}$ and $\tau \eqdef \frac{t}{n}$.}
		\end{equation} 
\end{enumerate}
\end{proposition}
By using this proposition, we readily obtain
\begin{proposition}[exponential behavior of $\delta^2$]\label{prop:bias0}
Let $\tau$ and $\omega$ be two reals in the interval $\left[0,\frac{1}{2}\right]$. Let $\omega^\perp \eqdef \frac{1}{2}-\sqrt{\omega(1-\omega)}$ and $z \eqdef \frac{1-2\tau  - \sqrt{D}}{2 (1-\omega)}$ where
	 $D \eqdef \left(1- 2 \tau \right)^2-4 \omega(1-\omega)$. There exists a sequence of positive integers 
$(t_n)_{n \in \NN}$ and $(w_n)_{n \in \NN}$, such that $\frac{t_n}{n} \mathop{\rightarrow}\limits_{n \rightarrow \infty} \tau$, 
$\frac{w_n}{n} \mathop{\rightarrow}\limits_{n \rightarrow \infty} \omega$ and 
$\frac{\log_2(K_{w_n}^n(t_n)^2/\binom{n}{w_n}^2)}{n}$ has a limit which we denote $\tilde{\delta}(\tau,\omega)$ with
$$
\widetilde{\delta}(\tau,\omega) = \left\{
\begin{array}{ll}
	2 \left(\tau \log_2(1-z)+ (1-\tau) \log_2(1+z) - \omega\log_2 z - h(\omega)\right)  & \mbox{ if } \tau \in [0,\omega^\perp] \\
	 1 - h(\tau)-h(\omega) & \mbox{ otherwise.}
\end{array}
\right.
$$
\end{proposition}
\begin{proof}
In the case $\tau \in [0,\omega^\perp]$ we just let $t_n = \lceil \tau n\rceil$, $w_n = \lceil \omega n\rceil$ and use directly the asymptotic expansion 
\eqref{eq:realexplicit}. In the case $\tau\in\left[\omega^\perp,\frac{1}{2}\right]$ we still define $w_n$ with $w_n \eqdef \lceil \omega n\rceil$ but define $t_n$ differently. For $n$ large enough, we know from Proposition \ref{prop:expansion} that $\lceil \tau n\rceil$ lies between two zeros of the Krawtchouk polynomial and that there exists an integer $t_n$ in this interval such that 
$
\frac{\log_2(K_{w_n}^n(t_n))}{n} = \frac{1+h(\omega)-h(\tau_n)}{2} + o(1)
$
where $\tau_n = \frac{t_n}{n}$. Now since the size of this interval is an $o(n)$ we necessarily have $\tau_n=\tau+o(1)$ and therefore 
$
\frac{\log_2(K_{w_n}^n(t_n))}{n} = \frac{1+h(\omega)-h(\tau)}{2} + o(1).
$
\end{proof}

The point of this proposition is that the term $2 \log_2(K_w^{n-s}(u)/\binom{n-s}{w})$ quantifies the exponential behaviour of the square $\varepsilon^2$ of the bias $\varepsilon$ (see Lemma \ref{lem:bias}) and $1/\varepsilon^2$  is up to polynomial terms the number of parity-checks we need for having enough information to solve the LPN problem as will be seen. This is because the capacity of the BSC$(\frac{1 - \varepsilon}{2})$ is $1 - h\left( \frac{1 - \varepsilon}{2}\right)=\theta(\varepsilon^2)$ and that solving an LPN-problem with a secret of size $s$ and $N$ samples amounts to be able to decode a random linear code of rate $\frac{s}{N}$ over the BSC$(\frac{1 - \varepsilon}{2})$. It is therefore doable as soon as the rate is below the capacity (see Theorem \ref{th:caplinearcodes}). The reason why the Shannon capacity appears here is because of the following heuristic/assumption we will make here:
\begin{assumption}[LPN modelling]\label{ass:LPN}
We will assume that the $\scal{\ev_\sN}{\hv_\sN}$ are i.i.d Bernoulli random variables of parameter $\frac{1-\varepsilon}{2}$.
\end{assumption}

Strictly speaking, the corresponding random variables are not independent. {However, note that similar heuristics are also used to analyze a related lattice decoder making use of short dual lattice vectors (they are called dual attacks in the literature). We will discuss this assumption in more depth in Subsection \ref{ss:assumption}. Assumption \ref{ass:LPN} models the LPN noise as a binary symmetric channel BSC$(\frac{1- \varepsilon}{2})$ of crossover probability $\frac{1- \varepsilon}{2}$. A straightforward application of Theorem \ref{th:caplinearcodes} together with the fact that the capacity of a binary symmetric BSC$(\frac{1- \varepsilon}{2})$ is $1-h\left( \frac{1- \varepsilon}{2}\right)=\Omega(\varepsilon^2)$ implies}

\begin{fact} With Assumption \ref{ass:LPN}, the number $N$ of LPN samples is such that $s/N = O(\varepsilon^2)$ for a small enough constant in the $O$, performing maximum-likelihood decoding of the corresponding $[N,s]$ binary code recovers the secret $\ev_\sP$ with probability $1-o(1)$.
\end{fact}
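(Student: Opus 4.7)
The plan is to reinterpret the LPN recovery problem as maximum-likelihood decoding of a random binary linear code over a binary symmetric channel, and then invoke Theorem~\ref{th:cap_linear_codes}. Stacking the $N$ samples, we may write them as $\vec{y} = \vec{A}\transp{\vec{s}} + \vec{e}$, where $\vec{A}$ is the $N\times s$ matrix whose rows are the $\vec{a}_i = \vec{h}_{i,\sP}$ and $\vec{e}\in\F_2^N$ collects the noise bits $\scal{\vec{e}_\sN}{\vec{h}_{i,\sN}}$. Under Assumption~\ref{ass:LPN} the coordinates of $\vec{e}$ are i.i.d.\ Bernoulli of parameter $\frac{1-\varepsilon}{2}$, and the rows of $\vec{A}$ are uniformly and independently distributed in $\F_2^s$ (as in the LPN oracle). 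Hence $\vec{y}$ is exactly a codeword of the random linear $[N,s]$ code with generator matrix read off from $\vec{A}$, transmitted through BSC$\bigl(\frac{1-\varepsilon}{2}\bigr)$, and recovering $\vec{s}$ amounts to maximum-likelihood decoding of that code.

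The next step is to quantify the capacity $C = 1 - h\!\left(\frac{1-\varepsilon}{2}\right)$ of the channel and show that $C = \Theta(\varepsilon^2)$. A second-order Taylor expansion of $h$ around $1/2$ gives
\begin{equation*}
1 - h\!\left(\frac{1-\varepsilon}{2}\right) = \frac{\varepsilon^2}{2\ln 2} + O(\varepsilon^4),
\end{equation*}
so there is a universal constant $c_0 > 0$ with $C \geq c_0 \varepsilon^2$ for every $\varepsilon\in(0,1)$. Consequently, if $s/N \leq c\,\varepsilon^2$ with the hidden constant $c$ chosen strictly smaller than $c_0$, the rate $s/N$ of the code is bounded away from the capacity by a constant factor $1-\delta > 0$, uniformly in $\varepsilon$.

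The conclusion is then immediate from Theorem~\ref{th:cap_linear_codes} applied to the family of random $[N,s]$ codes we just constructed: the probability that maximum likelihood decoding returns the wrong codeword tends to $0$ as $N\to\infty$, and since the encoding $\vec{s}\mapsto \vec{A}\transp{\vec{s}}$ is injective with overwhelming probability (the columns of $\vec{A}$ being i.i.d.\ uniform), this transfers directly to the recovery of the secret $\vec{s} = \vec{e}_\sP$ with probability $1-o(1)$.

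The essentially only delicate point is the acceptance of Assumption~\ref{ass:LPN}: the vectors $\vec{h}_{i,\sP}$ arise from parity-checks of \emph{fixed} Hamming weight $w$ on $\sN$ and the noise bits $\scal{\vec{e}_\sN}{\vec{h}_{i,\sN}}$ are therefore not literally uniform independent samples of a BSC. This is exactly the modelling hypothesis the authors highlight and back up with simulations, and once it is granted the whole argument reduces to the capacity estimate together with the random-coding theorem quoted as Theorem~\ref{th:cap_linear_codes}; no further step is needed.
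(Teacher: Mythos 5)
Your proof follows exactly the route the paper takes in the paragraph preceding the Fact: identify the $N$ LPN samples with a received word of a random $[N,s]$ linear code sent through BSC$\bigl(\tfrac{1-\varepsilon}{2}\bigr)$, observe that the channel capacity is $\Theta(\varepsilon^2)$ (you supply the Taylor expansion that the paper merely asserts), and invoke Theorem~\ref{th:cap_linear_codes} once the rate $s/N$ is held a constant factor below capacity. You even flag the same caveat the authors do about Assumption~\ref{ass:LPN} and the non-uniformity of $\vec{h}_{\sP}$, so the argument matches the paper's intended justification in substance and in its acknowledged heuristics.
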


Performing maximum likelihood decoding of the corresponding code can be achieved by a fast Fourier transform on a relevant vector. Indeed, for a given received word $\yv$ and a set $\widetilde{\sH}$ of $N$ parity-checks so that their restriction to $\sP$ leads to a set $\sH$ of $N$ different vectors of $\Ft^s$, we let for $\av \in \sH$, $\widetilde{\av}$ be the unique parity-check in $\widetilde{\sH}$ such that $\widetilde{\av}_{\sP}=\av$ and define  $f_{\yv,\sH}$ as 
\begin{eqnarray}
f_{\yv,\sH}: \Ft^s & \rightarrow & \RR \nonumber \\
\av & \mapsto & \left\lbrace \begin{array}{ll} (-1)^{\scal{\yv}{\widetilde{\av}}}\; & \text{ if $\av \in \sH$} \\ 0 & \text{ otherwise} \end{array} \right.\label{eq:def_f_H}
\end{eqnarray}	
We define the Fourier transform of such a function by
$$
\widehat{f}(\xv) \eqdef \sum_{\uv \in \Ft^s} f(\uv) (-1)^{\scal{\xv}{\uv}}.
$$
The code $\DD$ we want to decode (obtained via our LPN samples) is described as
\begin{equation}
	\DD \eqdef  \{\cv_{\xv},\;\xv \in \Ft^s \} \;\;\text{where}  \;\;
\cv_{\xv} \eqdef  (\scal{\xv}{\av})_{\av \in \sH}, \label{eq:def_C} 
\end{equation}
and the word $\uv_{\yv,\sH}$ we want to decode is given by
$
\uv_{\yv,\sH}=(\scal{\yv}{\widetilde{\av}})_{\av \in \sH}.
$
It is readily seen that
\begin{eqnarray*}
\widehat{f_{\yv,\sH}}(\xv) & = & \sum_{\av \in \Ft^s} f(\av) (-1)^{\scal{\xv}{\av}}\\
& = & \sum_{\substack{\av \in \sH}} (-1)^{\scal{\xv}{\av}+\scal{\yv}{\tilde{\av}}}\\
& = & \#\sH - 2 |\uv_{\yv,\sH}-\cv_{\xv}|.
\end{eqnarray*}
In other words, finding the closest codeword to $\uv_{\yv,\sH}$ is nothing but finding the $\xv$ which maximizes $\widehat{f_{\yv,\sH}}(\xv)$. 
This is achieved in time $\OO{s2^s}$ by performing a fast Fourier transform. Notice that an exhaustive search would cost $\OO{2^{2s}}$.

\subsection{Sketch of the whole algorithm}

\begin{algorithm}
\caption{RLPN decoder \label{alg:RLPN}}
\hspace*{\algorithmicindent} \textbf{Input:} $\yv$, $t$, $\CC$ an $\lbrack n,k \rbrack$-code\\ 
 \hspace*{\algorithmicindent} \textbf{Output:} $\ev$ such that $|\ev|=t$ and $\yv - \ev \in \CC$.
\begin{algorithmic}
\Function{RLPNdecode}{$\yv$, $\CC$, $t$}
\State $s, u \gets \Call{Optim}{t,\;k,\;n}$ 
\State
\Comment{$s$ and $u$ in order to minimize the complexity of the following procedure.}
\For{$i$ from $1$ to $\Niter$}
\Comment{$\Niter$ is a certain function of $n$, $s$, $t$ and $u$.}
\State $\sP \stackrel{\$}{\gets} \left\{\sI \subseteq \llbracket 1, n \rrbracket \ : \ \#\sI = s\right\}$
\State $\sN \gets \llbracket 1, n \rrbracket \setminus \sP$
\State $\sH \gets$ \Call{Create}{$N,w,\sP$}
\State $\widehat{f_{\yv,\sH}}\gets$\Call{FFT}{$f_{\yv,\sH}$}
\State $\xv_0 \gets \arg\max \widehat{f_{\yv,\sH}}$
\If{$\widehat{f_{\yv,\sH}}(\xv_0) \geq \frac{\delta N}{2}$} \Comment{$\delta \eqdef  {K_{w}^{n-s}(u)}/{\binom{n-s}{w}}$}.
\State \Return $\ev$ such that $\ev_{\sP} = \xv_0$ and $\ev_{\sN} = \Call{RLPNdecode}{\yv_{\sN},\CC_{\sN},t - \hw{\xv_0}})$
\EndIf
\EndFor
\EndFunction
\end{algorithmic}
\end{algorithm}	

Besides, the fast Fourier transform solving the LPN problem, Algorithm \ref{alg:RLPN} uses two other ingredients:
\begin{itemize}
\item A routine \Call{Create}{$N,w,\sP$} creating a set $\sH$ of $N$ parity-check equations $\hv$ such that $|\hv_\sN|=w$ where $\sN \eqdef \llbracket 1, n \rrbracket \setminus \sP$.
We will not specify how this function is realized here: this is done in the following sections. This procedure together with an FFT for decoding the code associated to the parity-check equations in $\sH$ (see Equation \eqref{eq:def_C}) form the inner loop of our algorithm.
\item An outer loop making a certain number $\Niter$ of calls to the inner procedure, checking each time a new set $\sP$ of $s$ positions with the hope of finding an $\sN$ containing an unusually low number $u$ of  errors in it. The point is that with a right $u$, the number of times we will have to check a new  $\sP$ is outweighed by the decrease in $N$ because the bias $\delta$ is much higher for such a $u$. 
\end{itemize}

\subsection{Analysis of the RLPN decoder}	

We need to show now that our RLPN decoder returns what we expect.
\begin{proposition}[acceptation criteria]\label{prop:acceptation}
Under Assumption \ref{ass:LPN}, 
by choosing $N_{\mathrm{iter}} = \omega\left(\frac{1}{\Psucc}\right)$ (where $\Psucc$ is the probability over the choice of $\sN$ that there are exactly $u$ errors in $\sN$), $s = \omega(1)$ and $N = \omega\left( \frac{n}{\delta^2}\right)$,
we have with probability $1 - \oo{1}$ that at least one iteration is such that $\ev_{\sP}$ meets the acceptation criteria $\widehat{f_{\yv,\sH}}(\ev_{\sP}) \geq \frac{\delta N}{2}$. Moreover, the probability that there exists $\xv \neq \ev_{\sP}$ which meets this acceptation criteria is $o(1)$.
\end{proposition}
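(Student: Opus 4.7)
The proposition naturally splits into a \emph{completeness} claim (some iteration accepts the true secret $\ev_\sP$) and a \emph{soundness} claim (no wrong $\xv$ is ever accepted). Call an iteration \emph{good} when the random set $\sP$ satisfies $|\ev_\sN|=u$; the probability of this is the hypergeometric quantity $\Psucc$. Since the $\Niter$ iterations are independent, the probability that no iteration is good is at most $(1-\Psucc)^{\Niter}\leq e^{-\Psucc \Niter}=e^{-\omega(1)}=o(1)$.

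For completeness, condition on a given good iteration. Using $\yv=\cv+\ev$ with $\cv\in\CC$ and $\widetilde{\av}\in\dual{\CC}$, one has $\scal{\yv}{\widetilde{\av}}=\scal{\ev_\sP}{\av}+\scal{\ev_\sN}{\widetilde{\av}_\sN}$, so that
$$
\widehat{f_{\yv,\sH}}(\ev_\sP)=\sum_{\av\in\sH}(-1)^{\scal{\ev_\sP}{\av}+\scal{\yv}{\widetilde{\av}}}=\sum_{\av\in\sH}(-1)^{\scal{\ev_\sN}{\widetilde{\av}_\sN}}
$$
is a sum of $N$ random signs $(-1)^e$ for the LPN noise $e=\scal{\ev_\sN}{\widetilde{\av}_\sN}$ of Subsection~\ref{ss:reductiontoLPN}. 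Under Assumption~\ref{ass:LPN} these signs are i.i.d.\ and, by Lemma~\ref{lem:bias}, have common mean $\varepsilon=\delta(1+o(1))$. Hoeffding's inequality then gives
$$
\Prob\!\left(\widehat{f_{\yv,\sH}}(\ev_\sP)<\tfrac{\delta N}{2}\right)\leq \exp\!\left(-\tfrac{\delta^2 N(1+o(1))}{8}\right)=o(1),
$$
since $\delta^2 N=\omega(n)$. Combining this with the first step and the independence of the iterations yields that with probability $1-o(1)$ at least one iteration accepts $\ev_\sP$.

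For soundness, fix $\xv\neq\ev_\sP$ and set $\zv=\xv+\ev_\sP\neq\vec 0$. The same expansion gives
$$
\widehat{f_{\yv,\sH}}(\xv)=\sum_{\av\in\sH}(-1)^{\scal{\zv}{\av}}\,(-1)^{\scal{\ev_\sN}{\widetilde{\av}_\sN}}.
$$
Extending the heuristic of Assumption~\ref{ass:LPN} so that $\scal{\zv}{\av}$ behaves as an independent fair coin across $\av\in\sH$, the summands are bounded $\pm 1$ variables of mean $o(\delta)$, so Hoeffding gives $\Prob(\widehat{f_{\yv,\sH}}(\xv)\geq \delta N/2)\leq e^{-\Omega(\delta^2 N)}$. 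A union bound over the $2^s-1$ incorrect candidates and (if we want uniformity) over the $\Niter$ iterations yields a failure probability of order $\Niter\cdot 2^s\, e^{-\Omega(\delta^2 N)}$, which is $o(1)$ because $\delta^2 N=\omega(n)$ dominates both $s\leq n$ and $\log \Niter$, which is $O(n)$ in the parameter regimes considered here.

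The delicate point that deserves the most care is the mean-zero behavior of $\widehat{f_{\yv,\sH}}(\xv)$ for incorrect $\xv$: Assumption~\ref{ass:LPN} only controls the noise $\scal{\ev_\sN}{\widetilde{\av}_\sN}$, not the distribution of the $\sP$-restriction $\av=\widetilde{\av}_\sP$ of the parity-checks returned by \textsc{Create}. Either one extends Assumption~\ref{ass:LPN} to include that $\scal{\zv}{\av}$ is a uniform and independent bit, or one argues directly---via a Bienaym\'e--Tchebychev / Krawtchouk calculation analogous to Proposition~\ref{prop:bias}---that the empirical distribution of the $\av$'s on $\Ft^s$ is close enough to uniform for the concentration argument above to go through unchanged.
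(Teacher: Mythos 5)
Your completeness argument is essentially the paper's: condition on a good iteration, use Assumption~\ref{ass:LPN} and Hoeffding. That part is fine. The divergence, and the genuine gap, is in soundness. You correctly identify the delicate point — Assumption~\ref{ass:LPN} says nothing about the distribution of $\av=\widetilde{\av}_\sP$, so the mean-zero, i.i.d.\ behavior of $(-1)^{\scal{\zv}{\av}}$ is not given — but you then \emph{postulate} an extension of the assumption to make the summands i.i.d.\ fair coins, or wave at a Bienaym\'e--Tchebychev argument, without carrying either out. As written, the soundness half of your proof rests on a hypothesis the proposition does not grant.

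The paper's proof closes exactly this gap, and by a route different from either of the two you sketch. It does \emph{not} assume $\scal{\zv}{\av}$ is an independent fair coin. Instead it works with the code $\DD=\{\cv_\xv:\xv\in\Ft^s\}$ from \eqref{eq:def_C}, writes $\cv'\eqdef\cv_\xv-\cv_{\ev_\sP}$ with $d\eqdef|\cv'|$, and conditions on the weight $b$ of the (Assumption-\ref{ass:LPN}-distributed) noise vector $\yv'$: conditionally on $|\yv'|=b$, $\yv'$ is uniform on the weight-$b$ sphere, giving an exact hypergeometric expression for $\Prob(|\cv'-\yv'|=a \mid |\yv'|=b)$. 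The crux is then a first-moment / Markov argument (via Barg's lemma) bounding the minimum and maximum nonzero weight of $\DD$, which forces $d\in\llbracket d^-,d^+\rrbracket$ with $d^\pm = N(1\pm\Th{\sqrt{s/N}})/2$ w.h.p.; plugging this into the hypergeometric bound and using the behavior of $h$ near $1/2$ yields $2^{-\Om{N\delta^2}+2s}$ per term, after which the union bound over $a$, $b$, $\xv$, and the iterations gives $\Niter\,2^s N^2\,2^{-\Om{N\delta^2}+2s}=o(1)$. So the burden your ``extended assumption'' carries is instead discharged by a coding-theoretic control of the weight spectrum of $\DD$ combined with conditioning on $|\yv'|$. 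You should supply this argument (or an equivalent) rather than strengthen Assumption~\ref{ass:LPN}; the latter would be a genuinely stronger hypothesis and changes what the proposition actually proves.

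One more small point: even under your extension, the $\av\in\sH$ are distinct by construction, so $\scal{\zv}{\av}$ are sampled without replacement rather than being i.i.d.; this is harmless (negative association only helps concentration), but worth a remark if you go that route.
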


\begin{proof}
We need to show that two things happen both with probability $1-o(1)$: 
 $(i)$ there is at least one iteration in the Algorithm \ref{alg:RLPN} for which $\widehat{f_{\yv,\sH}}(\ev_{\sP}) \geq \frac{\delta N}{2}$ and
$(ii)$ for all $\xv \in \Ft^s$ different from $\ev_{\sP}$, we have $\widehat{f_{\yv,\sH}}(\xv) < \frac{\delta N}{2}$ for all iterations.\\
The first point $(i)$ follows from the fact that by taking $\Niter \eqdef \omega\left(\frac{1}{\Psucc}\right)$, we have that one iteration is such that $\hw{\ev_{\sN}} = u$ and $\hw{\ev_{\sP}} = t - u$ with probability $1 - \oo{1}$. For such an iteration, we have from Assumption \ref{ass:LPN} that $\dist{\cv_{\ev_{\sP}}}{\uv_{\yv,\sH}} \sim \mathrm{Binomial}\left(N, \frac{1 - \varepsilon}{2}\right)$. Thus, by using the Hoeffding inequality,
\begin{equation*}
	\Prob\left(\widehat{f_{\yv,\sH}}(\ev_{\sP}) \geq \frac{\delta N}{2} \right) = \Prob\left(\dist{\cv_{\ev_{\sP}}}{\uv_{\yv,\sH}} \leq \frac{1-\frac{\delta}{2}}{2} N\right) \\
	\geq  1 - \exp \left( - \frac{(\varepsilon-\delta/2)^2 N}{2}\right)
\end{equation*}
which is a $1-o(1)$ by the choice made on $N$.

\noindent For the second point $(ii)$, consider now an $\xv \in \Ft^s$ such that $\xv \neq \ev_{\sP}$. Let $\cv' \eqdef \cv_{\xv} - \cv_{\ev_{\sP}}$, $d \eqdef \hw{\cv'}$ and $\yv' \eqdef \uv_{\yv,\sH} - \cv_{\ev_{\sP}}$. Then we have:
\begin{eqnarray}
\Prob\left(\widehat{f_{\yv,\sH}}(\xv) \geq \frac{\delta N}{2} \right) & = & \Prob\left(\dist{\cv'}{\yv'} \leq \frac{1 - \frac{\delta}{2}}{2} N \right) \;\;\text{(because $|\cv_\xv-\uv_{\yv,\sH}|=|\cv'-\yv'|$)} \nonumber\\
& = & \sum_{b = 0}^{N} \sum_{a=0}^{\frac{1 - \frac{\delta}{2}}{2} N }\Prob\left( \dist{\cv'}{\yv'} = a \ \Big\vert \ \hw{\yv'} = b \right) \; \Prob\left( \hw{\yv'} = b \right) \nonumber \\
& \leq & \sum_{b = 0}^{N} \sum_{a=0}^{\frac{1 - \frac{\delta}{2}}{2} N } \Prob_{\yv'}\left( \dist{\cv'}{\yv'} = a \ \Big\vert \ \hw{\yv'} = b \right) \nonumber \\
& \leq & \sum_{b = 0}^{N} \sum_{a=0}^{\frac{1 - \frac{\delta}{2}}{2} N } \frac{\binom{b}{\frac{d+b-a}{2}} \binom{N - b}{\frac{d-b+a}{2}}}{\binom{N}{d}}. \label{eq:last}
\end{eqnarray}
The last point follows from the fact that $|\yv'| \sim \mathrm{Binomial}\left(N, \frac{1 - \varepsilon'}{2}\right)$ for some $\varepsilon'$ depending on $|\ev_{\sN}|$ and therefore the conditional distribution of $\yv'$ given that $|\yv'|=b$ is the uniform distribution of words of weight $b$ over $\Ft^N$. This implies
$$
 \Prob_{\yv'}\left( \dist{\cv'}{\yv'} = a \ \Big\vert \ \hw{\yv'} = b \right) = \frac{\binom{b}{\frac{d+b-a}{2}} \binom{N - b}{\frac{d-b+a}{2}}}{\binom{N}{d}}.
$$

Note that both $\cv_{\xv}$ and $\cv_{\ev_{\sP}}$ are in the code $\DD$ defined by Equation \eqref{eq:def_C}. We can bound again the (typical) minimum non-zero weight and the  maximum weight by using \cite[Lemma 1.1, p.10]{B97b}. 
Let $\sB = \{\xv \in \Ft^N: 0 < |\xv| \leq N h^{-1}(1-2s/N)\}$. The expected number $M$ of codewords of $\DD$ in $\sB$ is $\frac{\#\sB}{2^{N(1-s/N)}}$.
Since $\#\sB \leq 2^{Nh(h^{-1}(1-2s/N))}=2^{N(1-2s/N)}$ we have that $M \leq 2^{-s}$. Since the probability that the minimum distance of
$\DD$ is less than or equal to $N h^{-1}(1-2s/N)$ is upper-bounded by $M$, we obtain that the minimum distance of $\DD$ is greater than
$N h^{-1}(1-2s/N)$ with probability $1-o(1)$. A similar reasoning can be made for the maximum weight. We therefore obtain that with probability
$1-o(1)$ all the weights $d$ of the non-zero codewords of $\DD$ lie in $\IInt{d^-}{d^+}$ where
\begin{equation*}
d^- \eqdef Nh^{-1}(1-2s/N) \quad \mbox{and} \quad 
d^{+} \eqdef N - Nh^{-1}(1-2s/N).
\end{equation*}
In such a case, we always have for $n$ large enough
\begin{eqnarray}
a \leq&  d  &\leq N-a \label{eq:d}\\
1-2s/N  \leq &h(d/N) & \leq 1.\label{eq:hdN}
\end{eqnarray}
We observe now that
\begin{eqnarray}
\frac{\binom{b}{\frac{d+b-a}{2}} \binom{N - b}{\frac{d-b+a}{2}}}{\binom{N}{d}} & \leq & 2^{b h\left( \frac{1}{2} - \frac{d - a}{2b}\right) + (N - b) h\left( \frac{1}{2} - \frac{N - d - a}{2(N-b)}\right) - N h\left( \frac{d}{N}\right)} \;\;\text{(because of \eqref{eq:d})} \nonumber\\
& \leq & 
2^{ \min \left[ - b \left( 1 - h\left( \frac{1}{2} - \frac{d - a}{2b}\right)\right) + 2s \ , \ -(N - b) \left( 1 - h\left( \frac{1}{2} - \frac{ d - a}{2(N-b)}\right)\right) + 2s \right] } \;\;\text{(because of \eqref{eq:hdN})} \nonumber\\
& \leq & 2^{-\frac{N}{2} \left( 1 - h\left( \frac{1}{2} - \frac{d - a}{2N}\right)\right) + 2s} \label{eq:anotherlast}
\end{eqnarray}
The last inequality comes from the fact that  either $b$ or $N-b$ is greater than $\frac{N}{2}$ and that both of them are smaller than $N$.
Since $h^{-1}(1-u) = \frac{1}{2}-\Th{\sqrt{u}}$ for $u \rightarrow 0^+$ we have that 
$ N/2(1- \Th{\sqrt{\frac{s}{N}}}) \leq d \leq N/2(1+ \Th{\sqrt{\frac{s}{N}}})$ and
therefore
\begin{equation}\label{eq:delta_final}
\frac{d-a}{2N} \geq \frac{\delta/2-\Th{\sqrt{\frac{s}{N}}}}{4} = \Om{\delta}.
\end{equation}
By using that $h(1/2-u)=1-\Th{u^2}$ for $u \rightarrow 0$ together with \eqref{eq:delta_final} in \eqref{eq:anotherlast} we obtain
\begin{equation}\label{eq:lastlast}
\frac{\binom{b}{\frac{d+b-a}{2}} \binom{N - b}{\frac{d-b+a}{2}}}{\binom{N}{d}}  \leq 2^{-\Om{N\delta^2}+2s}.
\end{equation}
By plugging this inequality in \eqref{eq:last} we finally obtain
$$
\Prob\left(\widehat{f_{\yv,\sH}}(\xv) \geq \frac{\delta N}{2} \right)  \leq N^2 \; 2^{-\Om{N\delta^2}+2s}
$$
and the probability of the event ``there exists an iteration and an $\xv \neq \ev_\sP$ such that $\widehat{f_{\yv,\sH}}(\xv) \geq \frac{\delta N}{2}$''
is upper-bounded by
$$
N_{\text{iter}} \; 2^s N^2 2^{-\Om{N\delta^2}+2s} = o(1).
$$
\end{proof}

	The space and time complexity of this method are readily seen to be given by 
\begin{proposition}\label{prop:complexity1}
Assume that \Call{Create}{$N$,$w$,$\sP$} produces $N$ parity-check equations in space $\Seq$ and time 
$\Teq$. The probability $\Psucc$ (over the choice of $\sN$) that there are exactly $u$ errors in $\sN$ is given by
$
\Psucc = \frac{\binom{s}{t-u}\binom{n-s}{u}}{\binom{n}{t}}.
$ The space complexity $S$ and the time complexity $T$ of the RLPN-decoder are given by 
$$
\textbf{\textup{Space:}}\;\; S=\OO{\Seq+2^s}, \quad \textbf{\textup{Time:}}\;\; T = \OOt{\frac{\Teq+2^s}{\Psucc}}.
$$
The parameters $s$, $u$ and $w$ have to meet the following constraints
\begin{eqnarray}
N & \leq & 2^s \label{eq:constraint_N}\\
N & \leq & \frac{\binom{n-s}{w}}{2^{k-s}}.  
\end{eqnarray}
Under Assumption \ref{ass:LPN} the algorithm outputs the correct $\ev_\sP$ with probability $1-o(1)$ if in addition we choose $N$ and 
$\Niter$ such that
\begin{eqnarray}
N &= &\omega\left(n \left(\frac{\binom{n-s}{w}}{K_w^{n-s}(u)}\right)^2 \right) \label{eq:constraint_N_bis}\\
\Niter& =& \omega\left(\frac{1}{\Psucc}\right). \label{eq:constraint_Niter}
\end{eqnarray}
\end{proposition}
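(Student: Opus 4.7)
The plan is to verify each claim of the proposition in succession, largely by assembling preceding results.

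For the success probability $\Psucc$, I would argue by symmetry: since $\sP$ is drawn uniformly among the size-$s$ subsets of $\IInt{1}{n}$, one can equivalently treat $\supp(\ev)$ as fixed of size $t$ and $\sN$ as uniform of size $n-s$, after which counting the $\sN$'s that intersect $\supp(\ev)$ in exactly $u$ positions gives $\binom{t}{u}\binom{n-t}{n-s-u}/\binom{n}{n-s}$, which rewrites as the stated $\binom{s}{t-u}\binom{n-s}{u}/\binom{n}{t}$. For the complexity figures, one iteration of the outer loop calls \Call{Create}{} once (cost $\Teq$, $\Seq$), tabulates $f_{\yv,\sH}$ on $\Ft^s$, and performs one size-$2^s$ Fast Fourier Transform---together $\OO{s 2^s}$ time and $\OO{2^s}$ space. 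Buffers are reused across iterations, so the space stays $\OO{\Seq + 2^s}$, whereas the time multiplies by $\Niter = \omega(1/\Psucc)$, yielding $T = \OOt{(\Teq+2^s)/\Psucc}$.

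Next, I would justify the two parameter constraints. The bound $N \leq 2^s$ holds because, by the definition in \eqref{eq:def_f_H}, $\sH$ consists of pairwise distinct elements of $\Ft^s$. The bound $N \leq \binom{n-s}{w}/2^{k-s}$ reflects the supply of suitable parity-checks: a first-moment computation via \cite[Lemma 1.1]{B97b} shows that for a random $(n-k)\times n$ parity-check matrix, the expected number of dual codewords with weight exactly $w$ on $\sN$ (and arbitrary pattern on $\sP$) equals $\binom{n-s}{w}\cdot 2^s / 2^k = \binom{n-s}{w}/2^{k-s}$, which upper bounds how many distinct LPN equations can be harvested.

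Finally, for correctness, I would combine Proposition~\ref{prop:acceptation} and Lemma~\ref{lem:bias}. The choice $\Niter = \omega(1/\Psucc)$ gives $(1-\Psucc)^{\Niter} \leq e^{-\omega(1)} = \oo{1}$, so with probability $1-\oo{1}$ some iteration has $|\ev_\sN|=u$; on such an iteration, Lemma~\ref{lem:bias} identifies the LPN bias as $\delta = K_w^{n-s}(u)/\binom{n-s}{w}$, and Proposition~\ref{prop:acceptation} then guarantees, under $N = \omega(n/\delta^2)$, both that $\ev_\sP$ clears the acceptation threshold and that all $2^s$ spurious candidates fail it by a union bound. Substituting $\delta$ yields \eqref{eq:constraint_N_bis}. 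The one delicate point, already absorbed into Proposition~\ref{prop:acceptation}, is the uniform tail control over spurious $\xv$'s: this is where the extra $n$ factor in $N = \omega(n/\delta^2)$ is needed, to ensure $N\delta^2 = \omega(n)$ dominates the $\OO{s}$ term appearing in the Hoeffding exponent.
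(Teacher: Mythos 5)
Your proof is correct and follows the same route as the paper's (very terse) proof, which simply declares the complexity and $\Psucc$ counts straightforward, notes the two constraints as supply/capacity bounds, and invokes Proposition~\ref{prop:acceptation} for correctness. You fill in the details the paper leaves implicit—the hypergeometric symmetry for $\Psucc$, the first-moment count of dual codewords for the second constraint, and the union bound over spurious candidates—all of which are sound; the only minor imprecision is describing the term to be dominated by $N\delta^2$ as $\OO{s}$, when the proof of Proposition~\ref{prop:acceptation} actually needs $N\delta^2=\omega(n)$ to dominate $\log_2 N_{\mathrm{iter}} + 3s + 2\log_2 N = O(n)$, and also tacitly uses $s=\omega(1)$ (so that the $2^{-s}$ bound on the minimum-distance failure probability is $o(1)$), a hypothesis that is present in Proposition~\ref{prop:acceptation} but which you, like the paper's statement of the present proposition, leave unstated.
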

\begin{proof}
All the points are straightforward here, with the exception of the constraints. The first constraint is that the number of parity-checks should not be bigger than the total number of different LPN samples we can possibly produce. The second one is that the number of parity-checks needed is smaller than the number of available parity-checks. 
The conditions ensuring the correctness of the algorithm follow immediately from Proposition \ref{prop:acceptation}.
\end{proof}

\subsection{{On the validity of Assumption \ref{ass:LPN}}}
\label{ss:assumption}

{
The proof of the correctness of the algorithm relies on the validity of the LPN modelling (Assumption \ref{ass:LPN}). We have programmed this algorithm and have verified that for several parameters it gives the correct answer. The corresponding experiments with the programs that have been used for running them can be found on \url{https://github.com/tillich/RLPNdecoding}. However, we have also found out (see \url{https://github.com/tillich/RLPNdecoding/tree/master/verification_heuristic/histogram}) that the second largest Fourier coefficient (the one which corresponds to the second nearest codeword, besides $\ev_{\sP}$) does not behave in the same way in the LPN model as in practice with the noise given by the $\scal{\hv_{\sN}}{\ev_{\sN}}$'s. This can be traced back to the fact that $\scal{\hv_{\sN}}{\ev_{\sN}}$ and $\scal{\hv'_{\sN}}{\ev_{\sN}}$ are positively correlated when $\hv_{\sN}$ and $\hv'_{\sN}$  are close to each other in Hamming distance. Actually these correlations have an effect on the tails of the largest Fourier coefficients as demonstrated in Figure \ref{fig:problem} which display longer tails corresponding to the largest Fourier coefficients in the case of a noise produced by $\scal{\hv_{\sN}}{\ev_{\sN}}$'s (called parity-checks in the figure) instead of Fourier coefficients produced by decoding a code with a BSC$(\frac{1-\varepsilon}{2})$ noise (called BSC in the figure). This phenomenon vanishes when $k$ gets larger as can be verified in Figure 
\ref{fig:problem} or on \url{https://github.com/tillich/RLPNdecoding/tree/master/verification_heuristic/histogram}. From our experiments
(see more details on \url{https://github.com/tillich/RLPNdecoding}) this phenomenon is not severe enough to prevent Algorithm \ref{alg:RLPN} from working but needs some adjustments about how larger $N$ has to be in terms of $\frac{1}{\delta^2}$. This experimental evidence leads us to conjecture 
\begin{conj}\label{con:main}
Algorithm \ref{alg:RLPN} is successful if we replace in Proposition \ref{prop:complexity1} the condition $N = \omega\left(n \left(\frac{\binom{n-s}{w}}{K_w^{n-s}(u)}\right)^2 \right)$ by the slightly stronger condition
$N = \omega\left(n^\alpha \left(\frac{\binom{n-s}{w}}{K_w^{n-s}(u)}\right)^2 \right)$ for a certain $\alpha \geq 1$. 
\end{conj}
If this conjecture is true, then obviously the asymptotic exponent of the complexity is unchanged if we replace Assumption 
\ref{ass:LPN} by Conjecture \ref{con:main}.
A semi-heuristic way to verify this conjecture could be to proceed as follows
\begin{enumerate}
\item Let $W$ be the weight of the vector $\left( \scal{\widetilde{\hv}_{\sN}}{\ev_{\sN}}\right)_{\hv \in \tilde{\sH}}$. Compute $\var{W}$ and prove that $\var{W}$ is of order $\OO{n^\beta N}$ where $\beta$ is some constant.
\item Use this computation to bound heuristically the tails of the Fourier coefficients and use this computation of  $\var{W}$  to give an estimation for the second largest Fourier coefficient when decoding the $[N,s]$-code which agrees with the experimental evidence. 
\item Use this to prove that the second largest Fourier coefficient is typically far away enough from the first one to prove the validity of Conjecture 
\ref{con:main}.
\end{enumerate}}
\begin{figure}[h!]
\begin{center}
\begin{tabular}{c}
\includegraphics[height=4cm]{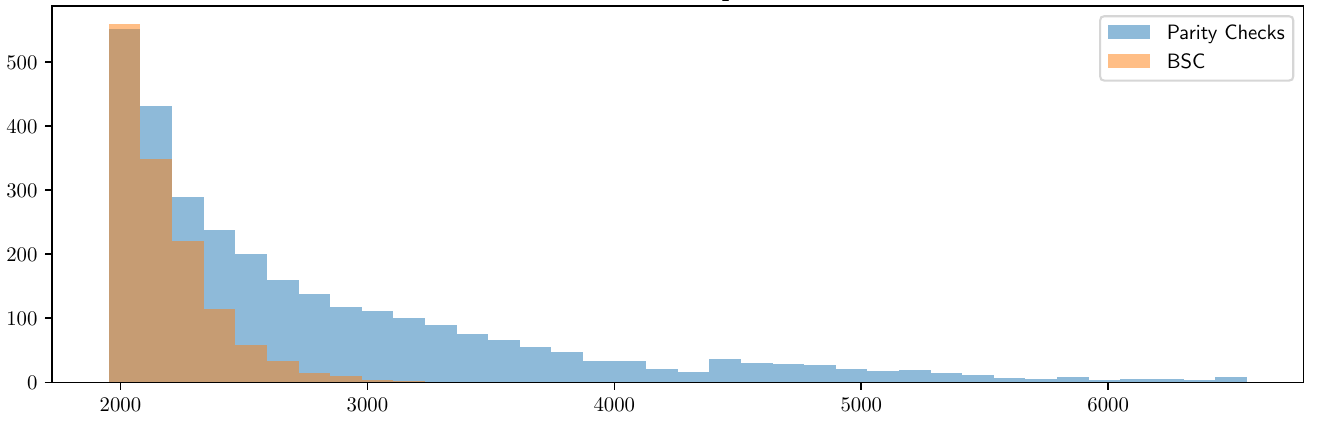} \\
\includegraphics[height=4cm]{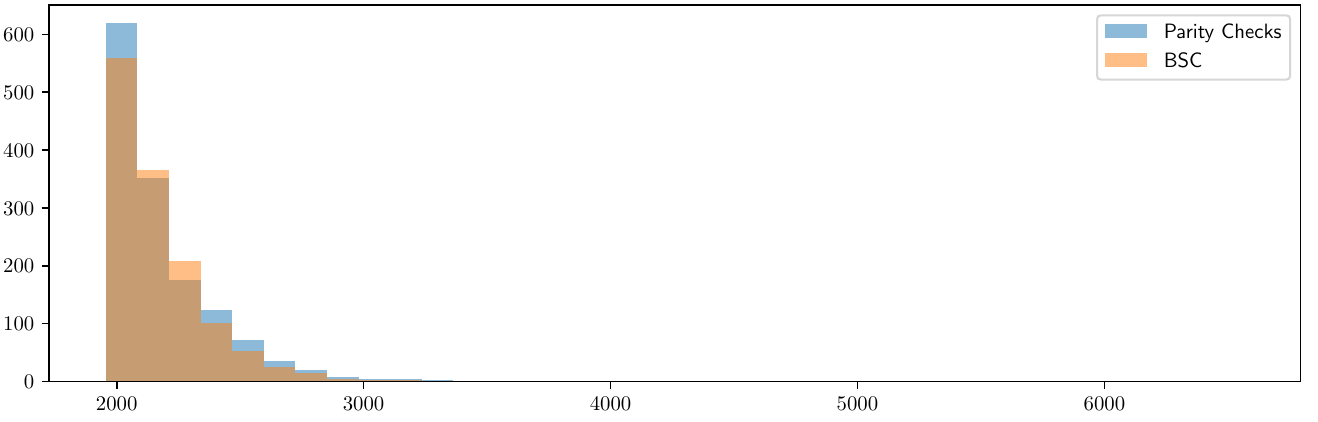}
\end{tabular}
\end{center}
\caption{\label{fig:problem} Tails of the largest Fourier coefficients when decoding the $[N,s]$-code either with the noise produced by the 
$\scal{\hv_{\sN}}{\ev_{\sN}}$'s or by the ideal LPN noise model (the BSC$(\frac{1-\varepsilon}{2})$ noise model). 
Both figures correspond to parity-checks $\hv_{\sN}$ of weight $6$ and to $s=19$. However they differ in the value for $k$.
$k$ equals $26$ in the first figure and displays rather heavy tails for the largest Fourier coefficients corresponding to the parity-checks $\hv_{\sN}$ whereas $k=40$ corresponds to rather similar tails
in both cases. This is a general trend that can be verified on {\url{https://github.com/tillich/RLPNdecoding/tree/master/verification_heuristic/histogram}}, when $k$ gets larger, the heavy tail phenomenon vanishes.}
\end{figure}

 \section{Collision techniques for finding low weight parity-checks}
\label{sec:Dumer}

\subsection{Using the \cite{D86} method}\label{subsecD86}
A way for creating  parity-checks with a low weight on $\sN$ is simply to use subset-sum/collision techniques \cite{D86,S88,D89}.
We start here with the simplest method for performing such a task pioneered by Dumer in \cite{D86}.
Consider a parity-check matrix $\Hm$ for the code $\CC$ we want to decode and keep only the columns 
belonging to $\sN$ to obtain an $(n-k)\times (n-s)$ matrix $\Hm_\sN$. The row-space of $\Hm_\sN$ generates the restrictions $\hv_\sN$ to $\sN$ of the parity-checks $\hv$ of $\CC$.
This row-space is nothing but the dual code $\CC^\perp$ punctured in $\sP$, \ie we keep only the positions in $\sN$. With our notation, this is $\CC^\perp_\sN$ and is an $[n-s,n-k]$-code. Therefore if we want to find parity-checks $\hv$ of $\CC$ such that $|\hv_\sN|=w$, this amounts to find codewords of $\CC^\perp_\sN$ of weight $w$. For this, we compute a parity-check matrix $\Hm'$ of $\CC^\perp_\sN$ \ie a $(k-s)\times (n-s)$ matrix such that 
$$
\CC^\perp_\sN = \{\cv \in \Ft^{n-s}: \Hm' \transp{\cv}=0\}.
$$
We split such a matrix in two parts\footnote{To simplify the presentation, the cut is explained by taking the first $\frac{n-s}{2}$ positions for the first part and the $\frac{n-s}{2}$ for the second part, but of course in general these positions are randomly chosen.} of the same size $\Hm'=\begin{pmatrix} \Hm_1 & \Hm_2 \end{pmatrix}$. We obtain an algorithm of time  and space complexity, $\Ctime$ and $\Cspace$ respectively, producing $N$ codewords of weight $w$, with  
\begin{equation*}
N  =  \frac{\binom{\frac{n-s}{2}}{\frac{w}{2}}^2}{2^{k-s}}(1+o(1)) \quad \mbox{and} \quad 
\Cspace = \Ctime = \OO{\binom{\frac{n-s}{2}}{\frac{w}{2}}+N}.
\end{equation*}
The algorithm for producing such codewords is to set up two lists, 
$$
\sL_1\eqdef \left\{(\Hm_1 \transp{\hv_1},\hv_1): |\hv_1|=\frac{w}{2},\; \hv_1 \in \Ft^{\frac{n-s}{2}}\right\}
$$ 
$$
\sL_2 \eqdef \left\{(\Hm_2 \transp{\hv_2},\hv_2): |\hv_2|=\frac{w}{2}, \; \hv_2 \in \Ft^{\frac{n-s}{2}}\right\}
$$ 
and looking for collisions $\Hm_1 \transp{\hv_1}=\Hm_2 \transp{\hv_2}$  in the lists. It yields vectors $\hv'=\hv_1 ||\hv_2$ of weight $w$ which are in $\CC^\perp_\sN$ since $\Hm' \transp{\hv'}=\Hm_1 \transp{\hv_1}+\Hm_2 \transp{\hv_2}=\mathbf{0}$. These vectors in $\Ft^{n-s}$ can be completed to give vectors $\hv \in \Ft^n$ such that $\hv_\sN=\hv'$. The number of collisions is expected to be of order ${\binom{\frac{n-s}{2}}{\frac{w}{2}}^2}/{2^{k-s}}$ since $2^{-(k-s)}$ is the collision probability of two vectors in $\Ft^{k-s}$. 
The algorithm for performing this task is given by Algorithm \ref{alg:collision}.

\begin{algorithm}[h!]
\caption{Creating low weight parity-checks by collisions \label{alg:collision}}
\hspace*{\algorithmicindent} \textbf{Input} $\CC$, $w$, $\sP$\\
 \hspace*{\algorithmicindent} \textbf{Output} a list of parity-check equations $\hv$ of $\CC$ such that $|\hv_\sN|=w$ where 
$\sN\eqdef\llbracket 1, n \rrbracket \setminus \sP$.
\begin{algorithmic}
\Function{Create}{$\CC$,$w$, $\sP$}
\State $\Hm \gets \Call{Parity-check-matrix}{\CC^\perp,\sP}$
\State
\Comment{returns a parity-check matrix for $\CC^\perp$ with an identity corresponding to the positions in $\sP$: \mbox{ }  \mbox{ } \mbox{ } \mbox{ } \mbox{ } \mbox{ } 
$\Hm = \begin{pmatrix} \Id &\Pm \\ \vec 0 & \Hm' \end{pmatrix}$ where we assume that the first block corresponds to the positions of $\sP$.}
\State $\sL_1 \gets \{(\Hm_1 \transp{\hv_1},\hv_1): |\hv_1|=w/2, \hv_1 \in \Ft^{\frac{n-s}{2}}\}$
\State $\sL_2 \gets \{(\Hm_2 \transp{\hv_2},\hv_2): |\hv_2|=w/2, \hv_2 \in \Ft^{\frac{n-s}{2}}\}$
\State
\Comment{We assume $\Hm'=\begin{pmatrix} \Hm_1 & \Hm_2 \end{pmatrix}$, with $\Hm_1$ and $\Hm_2$ of the same size.}
\State $\sL \gets \{\hv_1 ||\hv_2 \in \sL_1 \times \sL_2: \Hm_1\transp{\hv_1}=\Hm_2\transp{\hv_2}\}$
\State
\Return $\left\{ \hv' \transp{\Pm} ||\hv' : \hv' \in \sL \right\}$
\State
\Comment{It is straightforward to check that $\hv' \transp{\Pm} ||\hv'$ belongs to $\CC^\perp$.}
\EndFunction
\end{algorithmic}
\end{algorithm}	

We have represented in Figure \ref{fig:Dumer0} the form of the parity-checks output by this method, together with the bet we make on the error.
\begin{center}
	\begin{figure}[h!] 
		\begin{center}
		\scalebox{0.95}{ 
			\begin{tikzpicture}
				\node at (-0.5,0.25) {$\vec{h}$};
				\draw (0,0) rectangle (12,0.5);
				\filldraw[pattern = north east lines] (0,0) rectangle (2.5,0.5);
				\draw[<->, >=stealth] (0.025,0.75) -- (2.45,0.75);
				\node at (1.25,1) {$s$}; 
				\draw (2.5,0) rectangle (7.25,0.5);
				\draw[<->, >=stealth] (2.525,0.75) -- (7.2,0.75);
				\node at (4.875,1) {$(n-s)/2$}; 
				\node at (4.875,0.25) {\small $w/2$};
				\draw (7.25,0) rectangle (12,0.5);
				\draw[<->, >=stealth] (7.3,0.75) -- (11.95,0.75);
				\node at (9.625,1) {$(n-s)/2$}; 
				\node at (9.625,0.25) {$w/2$};

				\node at (-0.5,-1.25) {$\vec{e}$};
				\draw (0,-1.5) rectangle (12,-1);
				\draw (0,-1.5) rectangle (2.5,-1);
\node at (1.25,-1.25) {$t-u$}; 
				\draw (2.5,-1.5) rectangle (12,-1);
\node at (7.25,-1.25) {$u$};
			\end{tikzpicture}
		}
		\end{center}
	\caption{The form of the parity-checks produced by this method, \textit{vs.} the bet made on the error. The hatched rectangle of size $s$ for $\hv$ indicates that the weight is arbitrary on this part. \label{fig:Dumer0}}
	\end{figure}
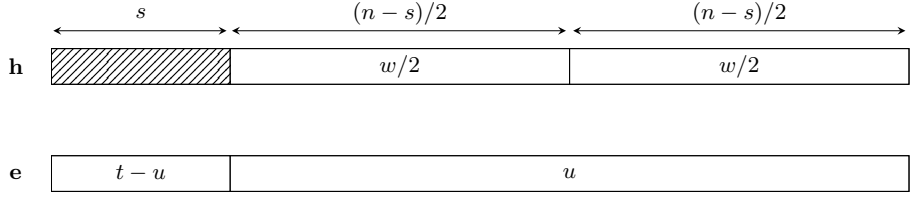
\end{center}

The amortized cost for producing a parity-check equation of weight $w$ is $O(1)$ as long as
$N \geq \Om{\binom{\frac{n-s}{2}}{\frac{w}{2}}}$. It is insightful to consider the smallest value of $w$ for which 
$\binom{\frac{n-s}{2}}{\frac{w}{2}} \leq {\binom{\frac{n-s}{2}}{\frac{w}{2}}^2}/{2^{k-s}}$. This is roughly speaking 
the smallest value (up to negligible terms) of $w$ for which the amortized cost for producing parity-check equations 
of weight $w$ is $O(1)$ per equation. In such a case, we roughly have
$$
N \approx \binom{\frac{n-s}{2}}{\frac{w}{2}}  \approx 
\frac{\binom{\frac{n-s}{2}}{\frac{w}{2}}^2}{2^{k-s}} \approx 2^{k-s}.
$$
In other words with this choice we have
$$
\Teq  =  \OO{2^{k-s}}.
$$
Let us choose now $u$ as the ``typical error weight'' when restricted to $\sN$, namely $u \approx t \frac{n-s}{n}$ and 
$s$ such that the decoding complexity of the $[N,s]$-code is also of order the codelength, \ie $N = \Tht{2^s}$. This would imply $2^s \approx 2^{k-s}$, which means that we are going to choose $s=\frac{k}{2}$. By using Proposition \ref{prop:complexity1}, all these choices 
would yield a time complexity $\CtimeD{86}$ for decoding $\CC$ which would be of order 
\begin{equation}
\label{eq:complexity_Dumer0}
\CtimeD{86} = \OOt{2^{k/2}},
\end{equation}
if the constraint $N = \Omt{\left(\frac{\binom{n-s}{w}}{K_w^{n-s}(u)}\right)^2}$ for successful decoding the $[N,s]$-code is met.
This amounts to 
$2^{Rn/2} = \Omt{\left(\frac{\binom{n(1-R/2)}{w}}{K_w^{n(1-R/2)}(t(1-R/2))}\right)^2}$,
where $R$ is the code rate, \ie $R=\frac{k}{n}$. By using Proposition \ref{prop:bias}, we can give an asymptotic formula for this constraint. It translates into
$$
R/2  \geq  2(1-R/2) \; \widetilde{\delta}\left(\tau,\omega/(1-R/2)\right),
$$
where $\widetilde{\delta}$ is the function defined in Proposition \ref{prop:bias0}. Amazingly enough this constraint is met up to very small values of $R$, it is only below $R \approx 0.02$ that this condition is not met anymore. This innocent looking remark has actually very concrete consequences. This means that above the range $R \gtrapprox 0.02$ the asymptotic complexity exponent, \ie $\alphaD{86} \eqdef \limsup_n \log_{2} \CtimeD{86} /n$ where 
$\CtimeD{86}$ is the time complexity, satisfies 
\begin{equation}
\label{eq:Dumer86}
\alphaD{86} \leq \frac{R}{2}.
\end{equation}
This is very surprising, since in the vicinity of $R \approx 0$ the asymptotic time complexity of {\em all known} decoding methods approach quickly $R$. In other words, in this regime, the complexity is of order $\Ctime \approx  2^{Rn}= 2^k$ for full distance ({\em a.k.a.} GV) decoding, meaning that they are not better than 
exhaustive search. Unfortunately this is also the case for our method. It can namely be proved that even by optimizing on the value of $s$, $w$ and $u$ we can not do better than this with our method, since $\alphaD{86}(R) \sim R$ as $R$ approaches $0$. However, 
as can be guessed from the fact that $\alpha_{\text{Dumer86}} \leq \frac{R}{2}$ for $R \gtrapprox 0.02$, the behaviour of the complexity is much better for our RLPN decoder. This can be verified in Figure \ref{fig:RLPNDumer86vsPrange}.

It is worthwhile to recall that ISD algorithms in the regime of the rate {\em close to $1$} precisely use this collision method to find low weight codewords in order to reduce significantly the complexity of decoding. 
In a sense, we have a dual version of the birthday/collision decoder of \cite{D86} with reduced complexity for rates close to $0$.

\begin{center} 
\begin{figure}[h!]
\centering
\scalebox{0.975}{
\begin{tabular}{cc}
\includegraphics[height=5cm]{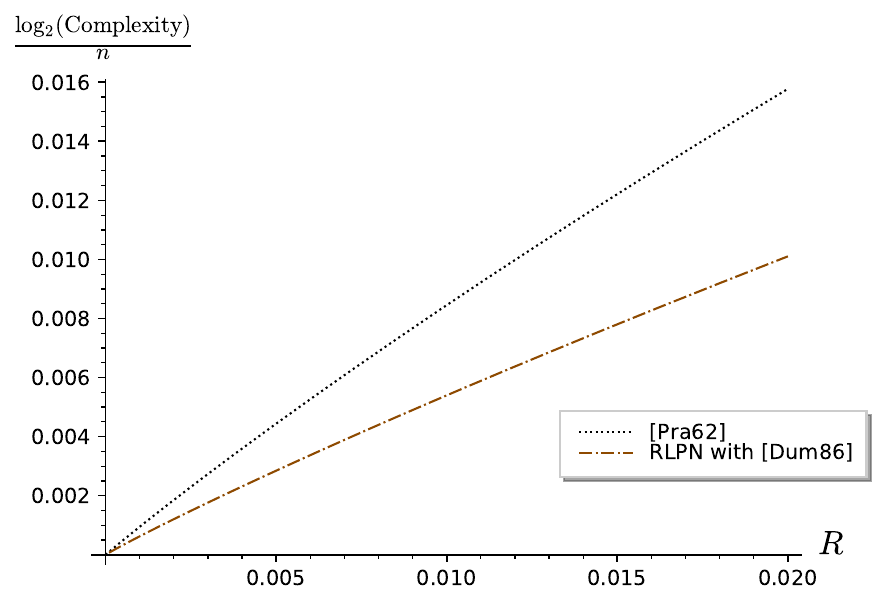} &\includegraphics[height=5cm]{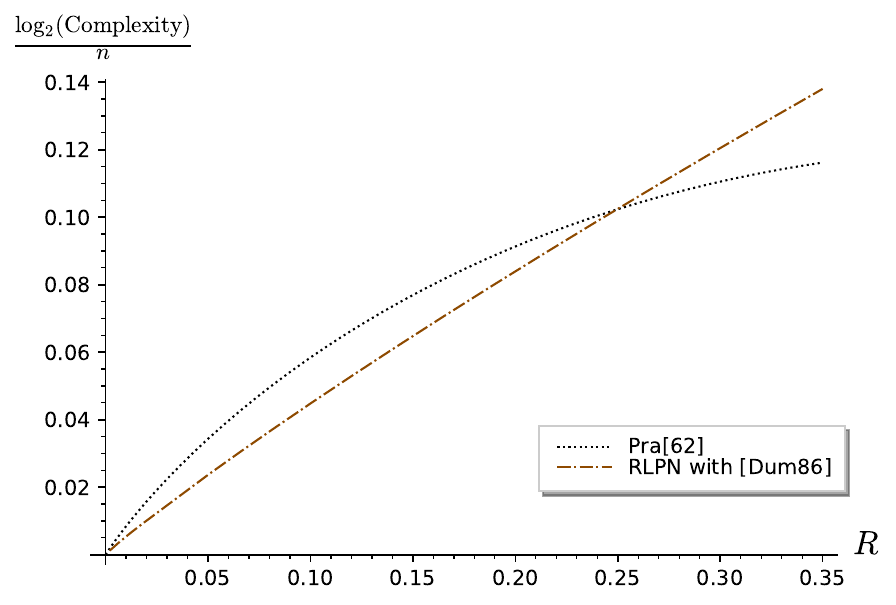}
\end{tabular}
}
\caption{The complexity of the RLPN-decoder for very small rates \emph{vs.} the simplest information set decoder, namely the ISD Prange decoder
	\cite{P62}. For small $R$, there is no much difference between the ISD Prange decoder and much more evolved decoders like \cite{BJMM12,MO15,BM17, BM18}. The RLPN-decoder with the very simple \cite{D86} technique performs much better for small rates than ISD decoders. It is only outperformed by the Prange decoder for rates above $0.25$ approximately. \label{fig:RLPNDumer86vsPrange} }
\end{figure}
\end{center}

\subsection{Improving \cite{D86}  by puncturing as in  \cite{D89}}\label{subsec:D91}
 There is a simple way of improving the generation of dual codewords of low weight on  $\sN$. It consists in partitioning $\sN$ in two sets $\sN_1$ and $\sN_2$ with 
$\sN_2$ being a subset of positions of size just a little bit above $n-k$ (which is the dimension of the dual code $\CC^\perp$), say $n-k+\ell$ and then to use the collision method to get dual codewords of weight $w_2$ on $\sN_2$. The same method is used in the improvement \cite{D89} of the simple collision decoder \cite{D86} or in a slightly less efficient way in \cite{S88}. It just consists in finding codewords in $\CC^\perp$ which have weight $w_1$ on $\sN_1$ and $w_2$ on $\sN_2$ instead of simply weight $w$ on $\sN$. We have represented in Figure \ref{fig:Dumer1} the form of the parity-checks we produce with this method. Note that the weight $w_1$ is expected to be half the size $k-\ell-s$ of $\sN_1$.
 
 \begin{center}
 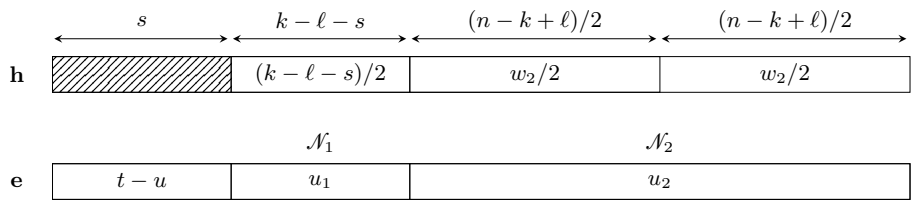
\begin{figure} 
 \begin{center}
 \scalebox{0.95}{ 
	\begin{tikzpicture}
	\node at (-0.5,0.25) {$\vec{h}$};
	\draw (0,0) rectangle (12,0.5);
	\filldraw[pattern = north east lines] (0,0) rectangle (2.5,0.5);
	\draw[<->, >=stealth] (0.025,0.75) -- (2.45,0.75);
	\node at (1.25,1) {$s$}; 
	\draw (2.5,0) rectangle (5,0.5);
	\draw[<->, >=stealth] (2.525,0.75) -- (4.95,0.75);
	\node at (3.75,1) {$k-\ell -s$}; 
	\node at (3.75,0.25) {\small $(k-\ell-s)/2$};
	\draw (5,0) rectangle (8.5,0.5);
	\draw[<->, >=stealth] (5.025,0.75) -- (8.45,0.75);
	\node at (6.75,1) {$(n-k+\ell)/2$}; 
	\node at (6.75,0.25) {$w_2/2$};
		\draw[<->, >=stealth] (8.525,0.75) -- (11.95,0.75);
	\node at (10.25,1) {$(n-k+\ell)/2$}; 
	\node at (10.25,0.25) {$w_2/2$};
	
	\node at (3.75,-0.7) {$\sN_1$};
		\node at (8.5,-0.7) {$\sN_2$};
	\node at (-0.5,-1.25) {$\vec{e}$};
	\draw (0,-1.5) rectangle (12,-1);
	\draw (0,-1.5) rectangle (2.5,-1);
\node at (1.25,-1.25) {$t-u$}; 
	\draw (2.5,-1.5) rectangle (5,-1);
\node at (3.75,-1.25) {$u_{1}$};
	\draw (5,-1.5) rectangle (12,-1);
\node at (8.5,-1.25) {$u_{2}$};
	\end{tikzpicture}
}
\end{center}
\caption{The form of the parity-checks produced by this method, \emph{vs.} the bet made on the error. The hatched rectangle of size $s$ for $\hv$ indicates that the weight is arbitrary on this part. \label{fig:Dumer1}}
\end{figure}

\end{center}

To understand the bias we get in this case, the proof of Proposition \ref{prop:bias} can be readily adapted to yield
	 \begin{proposition}\label{prop:bias2}
	 Assume that the code $\CC$ is chosen by picking for it an $(n-k) \times n$ binary parity-check matrix uniformly at random. 
	 Let $\sN$ be a fixed set of $n-s$ positions in $\IInt{1}{n}$ which is partitioned in two sets $\sN_1$ and $\sN_2$ and $\ev$ be some error of weight $u_i$ on $\sN_i$ for $i \in \{1,2\}$. For $i \in \{1,2\}$, choose $\hv$ uniformly at random among the parity-checks of $\CC$ of weight $w_i$ on the $\sN_i$'s and $\hv'$ uniformly at random among the words of weight $w_i$ on the $\sN_i$'s. For $i \in \{1,2\}$, let 
\begin{eqnarray*}
\delta_i & \eqdef & \bias\left( \scal{\ev_{\sN_i}}{\hv'_{\sN_i}} \right)\\
\delta & \eqdef & \delta_1 \delta_2
\end{eqnarray*} If the parameters $k$, $s$, $u_i$, $w_i$ are chosen as functions on $n$ so that  for $n$ going to infinity, the expected number $N$ of parity-checks of $\CC$ of respective weight $w_i$ on $\sN_i$ for $i \in \{1,2\}$, satisfies 
	 $N = \omega\left( 1 / \delta^2 \right)$ then for all but a proportion $o(1)$ of codes we have
	 $$
	 \bias\left( \langle \vec{e}_{\sN},  \vec{h}_{\sN}\rangle \right) = (1+o(1)) \delta.
	 $$
	 \end{proposition}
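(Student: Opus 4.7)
The plan is to mirror the proof of Proposition \ref{prop:bias}, with the only structural change being that the weight of $\hv$ on $\sN$ is now split into two prescribed weights $w_1,w_2$ on $\sN_1,\sN_2$. First I would observe that the product formula $\delta = \delta_1\delta_2$ is essentially free: when $\hv'$ is drawn uniformly among words of weight $w_i$ on $\sN_i$ (for $i\in\{1,2\}$) the restrictions $\hv'_{\sN_1}$ and $\hv'_{\sN_2}$ are independent, so $\scal{\ev_{\sN_1}}{\hv'_{\sN_1}}$ and $\scal{\ev_{\sN_2}}{\hv'_{\sN_2}}$ are independent Bernoulli random variables, and the elementary identity $\bias(X_1\oplus X_2)=\bias(X_1)\bias(X_2)$ gives $\bias(\scal{\ev}{\hv'})=\delta_1\delta_2=\delta$.

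Next, in analogy with \eqref{eq:Eb} and \eqref{eq:Epb}, I would define
\[
E_b \eqdef \#\{\hv \in \CC^\perp:\; |\hv_{\sN_1}|=w_1,\; |\hv_{\sN_2}|=w_2,\; \scal{\ev}{\hv}=b\}
\]
and
\[
E'_b \eqdef \#\{\hv' \in \Ft^n:\; |\hv'_{\sN_1}|=w_1,\; |\hv'_{\sN_2}|=w_2,\; \scal{\ev}{\hv'}=b\}.
\]
Since the constraint ``belongs to $\CC^\perp$'' is exactly the intersection of $n-k$ random linear constraints (the dual code is the row span of a uniform $(n-k)\times n$ parity-check matrix), the same counting lemma of \cite[Lemma 1.1]{B97b} that was used in the single-zone case applies verbatim: each candidate $\hv'$ with the prescribed weight profile lies in $\CC^\perp$ with probability $2^{-k}$ over the code, and variance is bounded by the mean. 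This yields $\esp{E_b}=E'_b/2^k$ and $\var{E_b}\le E'_b/2^k$, which is the only code-dependent input to the argument.

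From here the proof becomes identical to that of Proposition \ref{prop:bias}. I would apply the Bienaymé-Tchebychev inequality with the same function $f(n)=\delta\sqrt{N}/2$, observe that the assumption $N=\omega(1/\delta^2)$ makes $f(n)\to\infty$, and then derive the sandwich bound \eqref{eq:complicated} for $\bias(\scal{\ev}{\hv})=\frac{E_0-E_1}{E_0+E_1}$. The same computation \eqref{eq:smallo} (which only uses the scaling of $N$ and $\delta$, not their combinatorial origin) shows that the error term is $o(\delta(\mu_0+\mu_1))$, and finally using $\delta=(\mu_0-\mu_1)/(\mu_0+\mu_1)$, which still holds by the expectation formula, gives $\bias(\scal{\ev}{\hv})=(1+o(1))\delta$.

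There is no genuine obstacle here because we can re-use the existing argument as a black box once the two reductions above are in place; the only thing to be careful about is the independence step justifying $\delta=\delta_1\delta_2$, since failure of independence (for instance if one had instead conditioned on the total weight $w_1+w_2$ without fixing the split) would invalidate the product formula. The structure of the counting argument itself, together with the fact that the only property of ``weight $w$ on $\sN$'' used in the original proof is that it defines a fixed subset of $\Ft^n$ which is then intersected with $\CC^\perp$, guarantees that replacing ``weight $w$ on $\sN$'' by ``weight $w_i$ on $\sN_i$ for each $i$'' changes nothing in the second half of the argument.
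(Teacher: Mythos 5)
Your proposal is correct and matches the paper's intended approach exactly: the paper itself only says that ``the proof of Proposition \ref{prop:bias} can be readily adapted,'' and you have supplied precisely the two adaptations that are needed. The first is the identification $\bias\left(\scal{\ev}{\hv'}\right)=\delta_1\delta_2$, which you correctly justify by the independence of the restrictions $\hv'_{\sN_1}$ and $\hv'_{\sN_2}$ under the uniform measure with fixed split $(w_1,w_2)$ (as opposed to fixing only the total weight $w_1+w_2$, which would break independence), together with $\bias(X_1\oplus X_2)=\bias(X_1)\bias(X_2)$ for independent Bernoulli variables; this step is needed because the Chebyshev sandwich produces $(\mu_0-\mu_1)/(\mu_0+\mu_1)=\bias\left(\scal{\ev}{\hv'}\right)$ and one must tie that back to the $\delta$ declared in the statement. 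The second is the observation that Barg's counting lemma \cite[Lemma 1.1]{B97b} only cares that the reference set (here, the set of words with the prescribed weight profile $(w_1,w_2)$ on $(\sN_1,\sN_2)$) is fixed and is intersected with a random $[n,n-k]$ code, so the first- and second-moment bounds $\esp{E_b}=E'_b/2^k$ and $\var{E_b}\le E'_b/2^k$ carry over unchanged; from there the Chebyshev argument with $f(n)=\delta\sqrt{N}/2$ is literally the same as in the one-zone case. No gap.
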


  With the collision method we use,  the parity-checks we produce have actually a slightly more specific form, since $\sN_2$ is partitioned in two sets of (almost) the same size on which $\hv$ has weight $w_2/2$. It is not difficult to turn such a generation of parity-checks at the cost of a polynomial overhead into a generation of uniformly distributed parity-checks of weight $w_2$ on $\sN_2$. We leave out the details for doing this here. Under such an assumption, we have
	\begin{lemma}\label{lem:bias2} 
		With the same assumptions as in Proposition \ref{prop:bias2}, 
		\begin{equation*}
		\mathbb{P}_{\vec{h}}(\langle \vec{e}_{\sN},  \vec{h}_{\sN}=1\rangle) = \frac{1 - \varepsilon}{2} \quad \mbox{where} \quad \varepsilon = \delta_1  \delta_2 (1-o(1)) 
\end{equation*}
		with $\delta_1 \eqdef  \frac{ K_{w_1}^{k-\ell-s}(u_1)}{\binom{k-\ell-s}{w_1}}$, $\delta_2  \eqdef  \frac{ K_{w_2}^{n-k+\ell}(u_2)}{\binom{n-k+\ell}{w_2}}$,  $u_1 \eqdef |\ev_{\sN_1}|$, $u_2 \eqdef |\ev_{\sN_2}|$, $w_1 \eqdef |\vec{h}_{\sN_1}|$ 
		and  $w_2 \eqdef |\vec{h}_{\sN_2}|$. 	
		\end{lemma}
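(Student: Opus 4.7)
The plan is to obtain this result as a direct two-zone analogue of Lemma~\ref{lem:bias}, by combining Proposition~\ref{prop:bias2} (which is essentially the same statement as Proposition~\ref{prop:bias} adapted to a partition of $\sN$ into $\sN_1,\sN_2$) with an explicit Krawtchouk evaluation of each factor $\delta_i$. There are essentially three steps.

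First I would invoke Proposition~\ref{prop:bias2}. Its conclusion states that, for $\hv$ drawn uniformly among parity-checks of $\CC$ of weight $w_i$ on each $\sN_i$, one has $\bias(\langle \ev,\hv\rangle) = (1+o(1))\,\delta_1\delta_2$, provided the expected number $N$ of such parity-checks satisfies $N = \omega(1/(\delta_1\delta_2)^2)$, which is precisely the regime we work in. This reduces the lemma to computing each $\delta_i$ explicitly in closed form.

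Second, I would evaluate $\delta_i$ by repeating the combinatorial computation done inside Lemma~\ref{lem:bias}. Since the comparison vector $\hv'$ in Proposition~\ref{prop:bias2} is uniformly distributed among binary vectors whose restriction to $\sN_i$ has weight exactly $w_i$, the two restrictions $\hv'_{\sN_1}$ and $\hv'_{\sN_2}$ are \emph{independent}, and each $\hv'_{\sN_i}$ is uniform among the weight-$w_i$ words on $\sN_i$. Independence gives $\bias(\langle \ev_\sN,\hv'_\sN\rangle)=\bias(\langle \ev_{\sN_1},\hv'_{\sN_1}\rangle)\cdot\bias(\langle \ev_{\sN_2},\hv'_{\sN_2}\rangle)$ (a standard fact: the bias of an XOR of independent binary random variables factors). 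Then the same inclusion-exclusion on overlaps of supports used to establish Lemma~\ref{lem:bias} yields, for each $i\in\{1,2\}$,
\[
\delta_i \;=\; \frac{1}{\binom{n_i}{w_i}}\sum_{j}(-1)^j\binom{u_i}{j}\binom{n_i-u_i}{w_i-j} \;=\; \frac{K_{w_i}^{n_i}(u_i)}{\binom{n_i}{w_i}},
\]
where $n_1=k-\ell-s$ and $n_2=n-k+\ell$. Multiplying gives $\varepsilon = \delta_1\delta_2(1-o(1))$ as claimed.

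Finally, there is one subtlety I would need to address, and this is where I expect the main obstacle to lie: the collision procedure of Section~\ref{subsec:D91} does not produce parity-checks that are uniform among those of weight $w_2$ on $\sN_2$. It produces parity-checks whose restriction to $\sN_2$ has weight exactly $w_2/2$ on each of the two halves of $\sN_2$, which is a strictly smaller and biased sample space. To apply Proposition~\ref{prop:bias2} as stated, one needs to argue that, up to a polynomial overhead, this distribution can be reshaped into the uniform distribution on weight-$w_2$ vectors on $\sN_2$ — for instance by randomly permuting the coordinates of $\sN_2$ before the collision split, or by a rejection/resampling step on the joint weight profile. The paper simply asserts this is doable and omits the details, so the cleanest presentation would be to state this as an assumption (or a separate lemma) and then let the rest of the proof be the routine composition of Proposition~\ref{prop:bias2} and the Krawtchouk identity above.
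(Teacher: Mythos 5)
Your proof is correct and follows the paper's own route exactly: the paper's proof is a one-liner citing Proposition~\ref{prop:bias2} and Lemma~\ref{lem:bias}, which is precisely your two main steps (apply the two-zone bias proposition, then evaluate each $\delta_i$ via the Krawtchouk computation from Lemma~\ref{lem:bias} using independence of $\hv'_{\sN_1}$ and $\hv'_{\sN_2}$). The distributional subtlety you flag at the end is real, but the paper already addresses it in the sentence immediately preceding the lemma, where it asserts that the collision-shaped parity-checks can be reshaped into uniform weight-$w_2$ vectors on $\sN_2$ at polynomial overhead and that the lemma is stated ``under such an assumption,'' so it is a stated hypothesis rather than a gap.
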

		
\begin{proof}
This is a straightforward application of the previous proposition and 
 Lemma \ref{lem:bias}.
\end{proof}

All these considerations lead to a slight variation of the RLPN decoder given in Algorithm \ref{alg:RLPN}. Let us make now a bet on the weight 
$u_i$ of the error restricted to $\sN_i$ for $i \in \{1,2\}$ and use Dumer's \cite{D89} collision low-weight codeword generator to produce $N$ parity-checks $\hv$ such that $\left|\hv_{\sN_i}\right|=w_i$ for $i \in \{1,2\}$. We call the associated function \Call{Create}{$N$,$w_1$,$w_2$, $\sP$}. 

\begin{proposition}\label{prop:complexity3}
If Assumption \ref{ass:LPN} holds and 
assuming that \Call{Create}{$N$,$w_1$,$w_2$, $\sP$} produces $N$ parity-check equations in space $\Seq$ and time 
$\Teq$ that are of weight $w_i$ on $\sN_i$ for $i \in \{1,2\}$. The probability $\Psucc$ (over the choice of 	$\sN_1$ and $\sN_2$) that there are exactly $u_1$ errors in $\sN_1$ and $u_2$ errors in $\sN_2$ is given by
$$
\Psucc = \frac{\binom{s}{t-u_1-u_2}\binom{k-\ell-s}{u_1}\binom{n-k+\ell}{u_2}}{\binom{n}{t}}.
$$ The space complexity $\CspaceD{89}$ and time complexity $\CtimeD{89}$ of the RLPN-decoder are given by 
$$
\textbf{\textup{Space:}}\;\; \CspaceD{89}=\OO{\Seq+2^s}, \quad \textbf{\textup{Time:}}\;\; \CtimeD{89} = \OOt{\frac{\Teq+2^s}{\Psucc}}.
$$
under the constraint on the parameters $s$, $\ell$, $u_1$, $u_2$, $w_1$  and $w_2$ given by
\begin{eqnarray}
N & \leq & 2^s \label{eq:constraint_N}\\
N & \leq & \frac{\binom{k-\ell-s}{w_1}\binom{n-k+\ell}{w_2}}{2^{k-s}} \\
N &= &\om{\left(\frac{\binom{k-\ell-s}{w_1}\binom{n-k+\ell}{w_2}}{K_{w_1}^{k-\ell-s}(u_1)K_{w_2}^{k-\ell-s}(u_2)}\right)^2}.
\end{eqnarray}
\end{proposition}
We have found out that choosing $w_1$ carefully is unnecessary and simply setting it to it its expected value is sufficient, \ie
$w_1 = \frac{k-\ell-s}{2}$. Again, the same discussion as in the previous section applies and if Conjecture \ref{con:main} applies then 
the asymptotic form of the complexity is the same as if we use Proposition \ref{prop:complexity3} and we get  the following asymptotic form
\begin{proposition}\label{pro:RLPNDumer}If Conjecture \ref{con:main} holds, the asymptotic complexity exponent of the RLPN decoder based on Dumer's collision low weight dual codeword generators is given by
\begin{equation}
\alpha_{\text{Dumer89}}(R) \eqdef \min_{(\sigma,\nu_1,\nu_2,\lambda,\omega_1,\omega_2) \in \sR} \beta(R,\sigma,\nu_1,\nu_2,\lambda,\omega_1,\omega_2)
\end{equation}
where
$$
\beta \eqdef \max\left( \sigma,\nu' \right)+ \pi,
$$
$$
\nu' \eqdef \max{ \left( \frac{(1-R+\lambda)}{2} h\left( \frac{\omega_2}{1-R+\lambda}\right) , \nu \right)}, \quad \nu \eqdef (1-R+\lambda) h\left( \frac{\omega_2}{1-R+\lambda}\right) -\lambda, \\
$$
$$
\pi \eqdef 1-R -\sigma h\left( \frac{\tau-\nu_1-\nu_2}{\sigma}\right) 
-(R-\lambda-\sigma) h\left( \frac{\nu_1}{R-\lambda-\sigma} \right) -(1-R+\lambda) h \left( \frac{\nu_2}{1-R+\lambda}\right),
$$
$$
\tau \eqdef \deltaGV(R) = h^{-1}(1-R)
$$
and the constraint region $\sR$ is defined by the subregion of non-negative tuples $(\sigma,\nu_1,\nu_2,\lambda,\omega_1,\omega_2)$ such that $\omega_1 =  \frac{R - \lambda - \sigma}{2}$
$$
\sigma \leq R - \lambda, \quad  \nu_1 \leq R- \lambda - \sigma, \quad   \nu_2 \leq 1- R + \lambda, \quad    \tau- \sigma \leq \nu_1+\nu_2 \leq \tau, \quad \nu \leq \sigma, 
$$
and
$$
\nu = - (R-\lambda-\sigma)\widetilde{\delta}(\frac{\nu_1}{R-\lambda-\sigma},\frac{\omega_1}{R-\lambda-\sigma}) - (1-R+\lambda)\widetilde{\delta}(\frac{\nu_2}{1-R+\lambda},\frac{\omega_2}{1-R+\lambda})
$$
where $\widetilde{\delta}$ is the function defined in Proposition \ref{prop:bias0}.

\end{proposition}
 \section{Using advanced collision techniques}\label{sec:BJMM}
ISD techniques have evolved  \cite{S88,D89,BLP11,MMT11,BJMM12} by first introducing \cite{S88} collision/subset-sum techniques  
whose purpose is to produce for codes of rate close to $1$, all codewords of some small weight, and later on by substantially improving  them by using on top of that for instance representation techniques \cite{MMT11}. These algorithms come very handy in our case for devising the function \Call{Create}{$N,w,\sP$} that we need. In the previous section, we have explored what could be achieved by the very first techniques of this type taken from \cite{D86,D89}. We are 
going to explain now what can be gained by using \cite{MMT11,BJMM12}. 
It is convenient here to formalize the basic step used in the previous section which can be explained by the following function
\begin{algorithm}
\hspace*{\algorithmicindent} \textbf{Input:} $\sL_1 \subseteq \Ft^n$, $\sL_2\subseteq \Ft^n$, $w \in \llbracket 1, n \rrbracket$, $\Hm \in \Ft^{\ell\times n}$\\
\hspace*{\algorithmicindent} \textbf{Output:} a list $\sL = \{\xv=\xv_1+\xv_2: \xv_i \in  \sL_i,\;i\in \{1,2\},\;|\xv|=w,\;\Hm \transp{\xv} = \mathbf{0} \}$ of elements
of the form $\xv_1+\xv_2$ with $\xv_i$ belonging to $\sL_i$ of weight $w$ belonging to the code of parity-check matrix $\Hm$
\begin{algorithmic}
\Function{Merge}{$\sL_1,\sL_2,w,\Hm$}
\State
$\sL \gets \emptyset$
\ForAll{$\xv_1 \in \sL_1$}
\State
Store $\xv_1$ in a hashtable $\sT$ at address $\Hm \transp{\xv_1}$
\EndFor
\ForAll{$\xv_2 \in \sL_2$}
\If{$\exists\xv_1$ in $\sT$ at address $\Hm \transp{\xv_2}$ and $|\xv_1+\xv_2|=w$}
\State
Put $\xv_1+\xv_2$ in $\sL$
\EndIf
\EndFor
\State \Return $\sL$
\EndFunction
\end{algorithmic}
\end{algorithm}

It creates codewords of weight $w$ in a code of parity-check matrix $\Hm$ as sums $\xv_1 + \xv_2$ of two lists $\sL_1$ and $\sL_2$ with a complexity which is of the form
$\OO{\max\left(\#\sL_1,\#\sL_2,\frac{\#\sL_1\cdot\#\sL_2}{2^\ell} \right)}$ if the $\Hm \transp{\xv_i}$'s are distributed uniformly at random and independently (we will make this assumption from now on). It is clear that \cite{D86} and \cite{D89} is more or less a direct application of this method. \cite{MMT11} and \cite{BJMM12} use several layers of this function.
\cite{MMT11} starts by partitioning the set of positions of the vectors of $\Ft^n$ which are considered in two sets $\sI_1$ and $\sI_2$ of about the same size. Then it starts with two lists $\sL_1^0$ and $\sL^0_2$ of all elements of weight $p_0$ and support $\sI_1$ and $\sI_2$ respectively. It merges them in a list $\sL^1$ of elements of weight $p_1$ in the kernel of a parity-check matrix $\Hm_1$. Since the elements of $\sL_1^0$ and $\sL^0_2$ have disjoint supports by construction, we necessarily have that $p_1=2p_0$. List $\sL^1$ is then merged with itself to yield elements which are in the kernel of another matrix $\Hm_2$ (see Figure \ref{fig:MMTBJMM}). Since these are sums of elements of $\sL^1$ they are also in the kernel of $\Hm_1$, so that that the elements of the final list are of weight $p_2$ and belong to the code of parity-check 
$\Hm = \begin{pmatrix} \Hm_1 \\ \Hm_2 \end{pmatrix}$. The size of $\Hm_1$ is chosen such that an element $\xv$ of weight $p_2$ and $\Hm_1 \transp{\xv}=0$ is typically the sum of only two elements of $\sL^1$ (this is the point of the representation technique). \cite{BJMM12} is similar to \cite{MMT11} with one layer which is added. In this case, we create at the end a list of elements of weight $p_3$ which are in the code of parity-check matrix 
\begin{equation}\label{eq:H}
\Hm \eqdef \begin{pmatrix} \Hm_1 \\ \Hm_2 \\ \Hm_3  \end{pmatrix}.
\end{equation} 
The sizes of $\Hm_1$ in the \cite{MMT11} case, and of $\Hm_1$ and $\Hm_2$ in \cite{BJMM12} are chosen to ensure unicity of the representation of an element of a list as the sum of two elements of the lists used for the merge (this is the representation technique). 
\begin{center}
	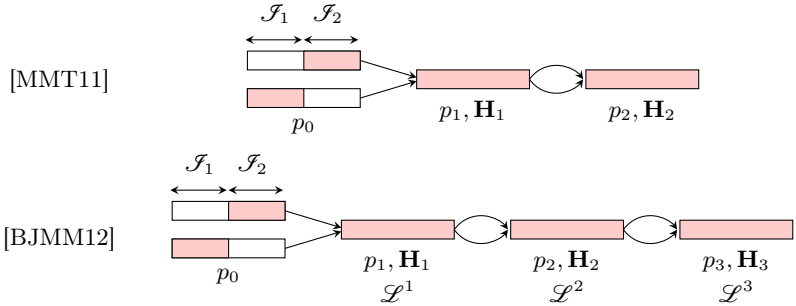
\begin{figure}[h!]
	\begin{center}
		\begin{tikzpicture}
			\node at (0,0.1) {\cite{MMT11}};
			\draw (2.5,0.25) rectangle (4,0.5);
			\filldraw[fill = red!20] (3.25,0.25) rectangle (4,0.5);
			\node at (2.875,1) {$\sI_{1}$};
			\draw[<->, >=stealth] (2.475,0.65) -- (3.225,0.65);
			\node at (3.575,1) {$\sI_{2}$};
			\draw[<->, >=stealth] (3.25,0.65) -- (3.975,0.65);
			\draw[->, >=stealth] (4,0.375) -- (4.75,0.15);
			
			\draw (2.5,-0.25) rectangle (4,0);
			\filldraw[fill = red!20] (2.5,-0.25) rectangle (3.25,0);
			\node at (3.25,-0.5) {$p_{0}$};
			\draw[->, >=stealth] (4,-0.125) -- (4.75,0.1);
				
			\filldraw[fill = red!20] (4.75,0) rectangle (6.25,0.25); 
			\node at (5.5,-0.3) {$p_{1},\vec{H}_{1}$};
			\filldraw[fill = red!20] (7,0) rectangle (8.5,0.25); 
			\node at (7.75,-0.3) {$p_{2},\vec{H}_{2}$};
			\draw[->, >=stealth] (6.25,0.125) arc (145:35:0.435cm);
			
			\draw[->, >=stealth] (6.25,0.125) arc (-145:-35:0.435cm);

			\node at (0,-2) {\cite{BJMM12}};
			\draw (1.5,-1.75) rectangle (3,-1.5);
			\filldraw[fill = red!20] (2.25,-1.75) rectangle (3,-1.5);
			\node at (1.875,-1) {$\sI_{1}$};
			\draw[<->, >=stealth] (1.475,-1.35) -- (2.225,-1.35);
			\node at (2.575,-1) {$\sI_{2}$};
			\draw[<->, >=stealth] (2.25,-1.35) -- (2.975,-1.35);
			\draw[->, >=stealth] (3,-1.625) -- (3.75,-1.85);
			
			\draw (1.5,-2.25) rectangle (3,-2);
			\filldraw[fill = red!20] (1.5,-2.25) rectangle (2.25,-2);
			\node at (2.25,-2.5) {$p_{0}$};
			\draw[->, >=stealth] (3,-2.125) -- (3.75,-1.9);
			
			\filldraw[fill = red!20] (3.75,-2) rectangle (5.25,-1.75); 
			\node at (4.5,-2.3) {$p_{1},\vec{H}_{1}$};
			\node at (4.5,-2.7) {$\sL^{1}$};
			\filldraw[fill = red!20] (6,-2) rectangle (7.5,-1.75); 
			\node at (6.75,-2.3) {$p_{2},\vec{H}_{2}$};
			\node at (6.75,-2.7) {$\sL^{2}$};
			\draw[->, >=stealth] (5.25,-1.875) arc (145:35:0.435cm);
			\draw[->, >=stealth] (5.25,-1.875) arc (-145:-35:0.435cm);
			
				\filldraw[fill = red!20] (8.25,-2) rectangle (9.75,-1.75); 
			\node at (9,-2.3) {$p_{3},\vec{H}_{3}$};
			\node at (9,-2.7) {$\sL^{3}$};
			\draw[->, >=stealth] (7.5,-1.875) arc (145:35:0.435cm);
			\draw[->, >=stealth] (7.5,-1.875) arc (-145:-35:0.435cm);
			
		\end{tikzpicture}
		\end{center}
	\caption{This figure represents the successive lists obtained in \cite{MMT11} and \cite{BJMM12}. The support of the elements of the list are represented in pink. Arrows point from the lists which are merged to the result of the merge and if two arrows depart from a list and arrive at another list, this means that the departure list is merged with itself. The weights of the elements are indicated for each level and the matrix $\Hm_i$ used for the merge is also given at the level of the result of the merge. \label{fig:MMTBJMM}}
	\end{figure}
\end{center}

We use these two techniques as we used the \cite{D86} technique inside the \cite{D89} technique, namely to generate codewords of $\CC^\perp$ (\ie $\Hm \transp{\xv}=\mathbf{0}$ for $\Hm$ given by \eqref{eq:H}) which are of weight $p_{3}$ on a set of indices of size $n-k+\ell$ (see Figure \ref{fig:BJMM}).

 \begin{center}
	\begin{figure} 
		\begin{center}
		\scalebox{0.95}{ 
			\begin{tikzpicture}
				\node at (-0.5,0.25) {$\vec{h}$};
				\draw (0,0) rectangle (12,0.5);
				\filldraw[pattern = north east lines] (0,0) rectangle (2.5,0.5);
				\draw[<->, >=stealth] (0.025,0.75) -- (2.45,0.75);
				\node at (1.25,1) {$s$}; 
				\draw (2.5,0) rectangle (5,0.5);
				\draw[<->, >=stealth] (2.525,0.75) -- (4.95,0.75);
				\node at (3.75,1) {$k-\ell -s$}; 
				\node at (3.75,0.25) {\small $(k-\ell-s)/2$
};
\draw[<->, >=stealth] (5.025,0.75) -- (11.95,0.75);
				\node at (8.5,1) {$n-k+\ell$}; 
				\node at (8.5,0.25) {$p_{3}$};

				\node at (-0.5,-1.25) {$\vec{e}$};
				\draw (0,-1.5) rectangle (12,-1);
				\draw (0,-1.5) rectangle (2.5,-1);
\node at (1.25,-1.25) {$t-u$}; 
				\draw (2.5,-1.5) rectangle (5,-1);
\node at (3.75,-1.25) {$u_{1}$};
				\draw (5,-1.5) rectangle (12,-1);
\node at (8.5,-1.25) {$u_{2}$};
			\end{tikzpicture}
		}
		\end{center}
		\caption{The form of the parity-checks produced by this method \cite{BJMM12}, \emph{vs.} the bet made on the error. The hatched rectangle of size $s$ for $\hv$ indicates that the weight is arbitrary on this part. \label{fig:BJMM}}
	\end{figure}
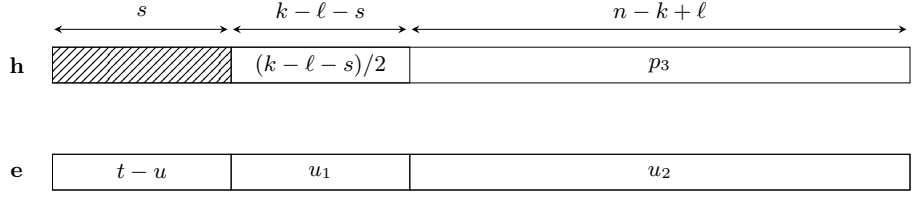
\end{center}

If we let $\ell_1$ be the number of rows of $\Hm_1$, $\ell_2$ be the number of rows of the matrix of $\Hm_2' \eqdef \begin{pmatrix} \Hm_1 \\ \Hm_2 \end{pmatrix}$, then the fact that the elements of $\sL^2$ should have a unique representation in terms of a sum of a pair of elements of $\sL^1$ respectively and that they are all elements $\xv$ of weight $p_1$ and $p_2$ respectively which satisfy $\Hm_2' \transp{\xv}=\mathbf{0}$ and $\Hm \transp{\xv}=\mathbf{0}$ respectively, imposes conditions \eqref{eq:representation} which follow. 
The $S_i$ represent the space complexity of the successive lists (\ie $\sL^0$, $\sL^1$, $\sL^2$ and $\sL^3$) used in the \cite{BJMM12} algorithm, whereas the $T_i$'s denote the complexity of each merge and $\Teq$ is the final complexity.
 \begin{align}
2^{\ell_1}  &=  \binom{p_2}{p_2/2}\binom{n-k+\ell-p_2}{p_1-p_2/2},
& 2^{\ell_2} &= \binom{p_3}{p_3/2}\binom{n-k+\ell-p_3}{p_2-p_3/2}  \label{eq:representation} 
\end{align}
\begin{equation}
S_0 =  \binom{\frac{n-k+\ell}{2}}{\frac{p_1}{2}}, \quad
S_1 =  \frac{\binom{n-k+\ell}{p_1}}{2^{\ell_1}}, \quad
S_2  =  \frac{\binom{n-k+\ell}{p_2}}{2^{\ell_2}}, \quad 
S_3  =  \frac{\binom{n-k+\ell}{p_3}}{2^{\ell}}
\end{equation}
\begin{equation}
T_0 = S_0, \quad
T_1  =  S_0 + \frac{S_0^2}{2^{\ell_1}}, \quad 
T_2  =  S_1 + \frac{S_1^2}{2^{\ell_2-\ell_1}}, \quad 
T_3  =  S_2 + \frac{S_2^2}{2^{\ell-\ell_2}},\quad
\Teq  =  T_0+T_1+T_2+T_3
\end{equation}

There is a similar proposition as Proposition \ref{pro:RLPNDumer} which gives the asymptotic complexity of the RLPN decoder used in conjunction with the \cite{MMT11} or \cite{BJMM12} techniques for producing low weight codewords. For \cite{MMT11} it is given by
\begin{restatable}{proposition}{MMT}\label{pro:RLPNMMT}
If conjecture \ref{con:main} applies, the asymptotic complexity exponent of the RLPN decoder based on \cite{MMT11} is given by
\begin{equation}
	\alpha_{\text{MMT}}(R) \eqdef \min_{(\sigma,\nu_1,\nu_2,\lambda,\lambda_1,\omega_1,\omega_2,\pi_1) \in \sR} \beta(R,\sigma,\nu_1,\nu_2,\lambda,\lambda_1,\omega_1,\omega_2,\pi_1)
\end{equation}
where
\begin{eqnarray*}
	\beta & \eqdef & \max(\sigma,\nu') + \pi, \\
	\nu' & \eqdef & \max(\gamma_1,\gamma_2), \quad \nu \eqdef (1-R+\lambda)h\left(\frac{\omega_2}{1-R+\lambda}\right)-\lambda, \\
	\gamma_1 & \eqdef & \max\left(\frac{1-R+\lambda}{2}h\left(\frac{\pi_1}{1-R+\lambda} \right),(1-R+\lambda)h\left(\frac{\pi_1}{1-R+\lambda} \right)-\lambda_1\right),\\
	\gamma_2 & \eqdef & 2(1-R+\lambda)h\left(\frac{\pi_1}{1-R+\lambda} \right)-\lambda_1-\lambda,\\
	\rho & \eqdef & 1-R -\sigma h\left( \frac{\tau-\nu_1-\nu_2}{\sigma}\right) -(R-\lambda-\sigma) h\left( \frac{\nu_1}{R-\lambda-\sigma} \right) -(1-R+\lambda) h \left( \frac{\nu_2}{1-R+\lambda}\right),\\
	\tau & \eqdef & \deltaGV(R) = h^{-1}(1-R)
\end{eqnarray*}
and the constraint region $\sR$ is defined by the subregion of non-negative tuples $(\sigma,\nu_1,\nu_2,\lambda,\lambda_1,\pi,\omega_1,\omega_2)$ such that
$$
\sigma \leq R-\lambda, \quad  \lambda_1 \leq \lambda, \quad  \pi_1 \leq \omega_2, \quad \nu_1 \leq R- \lambda - \sigma, \quad \nu_2 \leq 1- R + \lambda, \quad \nu \leq \sigma,
$$
$$
\tau- \sigma \leq \nu_1+\nu_2 \leq \tau, \quad \omega_1 = \frac{R - \lambda - \sigma}{2}, \quad \omega_2 < 1-R+\lambda,
$$
\begin{equation*}
\frac{\omega_2 }{2}< \pi_1 < 1-R+\lambda, \quad
\lambda_1 =  \omega_2 + (1-R+\lambda - \omega_2)h\left( \frac{\pi_1-\omega_2/2}{1- R+\lambda-\omega_2}\right) ,
\end{equation*}

$$
\nu= -\!(R-\lambda-\sigma)\widetilde{\delta}(\frac{\nu_1}{R-\lambda-\sigma},\frac{\omega_1}{R-\lambda-\sigma})
	-
	(1-R+\lambda)\widetilde{\delta}(\frac{\nu_2}{1-R+\lambda},\frac{\omega_2}{1-R+\lambda}) 
$$
where $\widetilde{\delta}$ is the function defined in Proposition \ref{prop:bias0}.

\end{restatable}
We give a proof  in the Appendix \ref{sec:RLPN-MMT-BJMM} for reference. There is a similar proposition for the asymptotic behaviour of RLPN decoding used together with \cite{BJMM12} which is given in the Appendix \ref{sec:RLPN-MMT-BJMM}. We have used them for producing the complexity curves given in Figure \ref{fig:lowerBound} which display the various complexities of the RLPN decoders we have presented. Even if there is a tiny improvement by using \cite{BJMM12} instead of \cite{MMT11} the two curves are nearly indistinguishable. A perspective of improvement of our algorithm could be to produce the parity-check equations by using more recent ISD techniques than \cite{BJMM12}, in particular \cite{MO15, BM17} or \cite{BM18} which all use nearest-neighbor search. Our preliminary results using in particular \cite{MO15} do not provide significant improvement, we have only been able to achieve a very slightly better complexity for rates close to $0.2$.

 \section{A Lower bound on the complexity of RLPN decoders}\label{sec:bound}

As pointed out all along the paper, RLPN decoding needs a large number $N$ of parity-check equations to work {\em but} of some shape as indicated below
\vspace*{-0.25cm}
\begin{center} 
		\scalebox{0.95}{ 
	\begin{tikzpicture}
		\node at (-0.5,0.25) {$\vec{h}$};
		\draw (0,0) rectangle (12,0.5);
		\filldraw[pattern = north east lines] (0,0) rectangle (2.5,0.5);
		\draw[<->, >=stealth] (0.025,0.75) -- (2.45,0.75);
		\node at (1.25,1) {$s$}; 
		\draw (2.5,0) rectangle (12,0.5);
\draw[<->, >=stealth] ((2.525,0.75) -- (11.95,0.75);
		\node at (7.25,1) {$n-s$}; 
		\node at (7.25,0.25) {$w$};

\end{tikzpicture}
}
\end{center} 
where the hatched area indicates that the weight is arbitrary on this part while $\vec{h}$ restricted on the other positions needs to have Hamming weight $w$. The number $N$ of such parity-checks has to verify (see Proposition \ref{prop:complexity1})
\begin{equation}\label{eq:cdtN0}  
N = \omega\left(n\left(\frac{\binom{n-s}{w}}{K_w^{n-s}(u)}\right)^2\right)
\end{equation} 
in order to be able to solve the underlying LPN problem. It can be verified that the smaller $w$ is (the bigger is the bias $\varepsilon$), the smaller is $N$ and the more efficient is our algorithm. Obviously if $w$ is too small, there are not enough such parity-checks.  
It can be verified that the expected number of parity-checks of the aforementioned shape is given by $2^{s} \binom{n-s}{w}/2^{k}$ in a random code (which is our assumption). Therefore we need \begin{equation}\label{eq:cdtN} 
N =\OO{ \frac{2^{s} \binom{n-s}{w}}{2^{k}}}.
\end{equation}

Given this picture it is readily seen that the complexity of RLPN decoding is always lower-bounded by $N$ (which is at least the cost to produce $N$ parity-checks) but we can be more accurate on the lower-bound over the complexity. Recall that we first need to solve an underlying LPN problem and that we make a bet on the number of errors $u$ in $\sN$. Therefore, {\em assuming} that we can compute a parity-check of the aforementioned shape in time $O(1)$, the complexity of this genie-aided RLPN decoding is given by 
\begin{equation}\label{eq:lowerbound}
\widetilde{O}\left( \frac{1}{\Psucc} \;\max \left( 2^{s}, N \right) \right) 
\end{equation}
where $\Psucc$ is given in Proposition \ref{prop:complexity1}. Our only constraints are given by \eqref{eq:cdtN0} and \eqref{eq:cdtN}. By optimizing \eqref{eq:lowerbound} over $s,u$ and $w$, we can give a lower-bound on the complexity of RLPN decoding. However notice that our lower-bound applies to a partition of parity-checks in two parts ($s$ and $n-s$).  We do not consider here finer partitions. This method for lower bounding the complexity of RLPN decoding is very similar to the technique used in \cite[\S 7]{DT17} to lower bound the complexity of statistical decoding. All in all, we give in Figure \ref{fig:lowerBound} this lower-bound of the complexity. The optimal parameters computed for each RLPN algorithms can be found on \url{https://github.com/tillich/RLPNdecoding}. As we see our RLPN decoders meet this lower-bound for small rates and we can hope to outperform significantly ISD's for code rates smaller than $\approx 0.45$. 
\begin{figure}[h!]
	\centering
	\includegraphics[height=6.2cm]{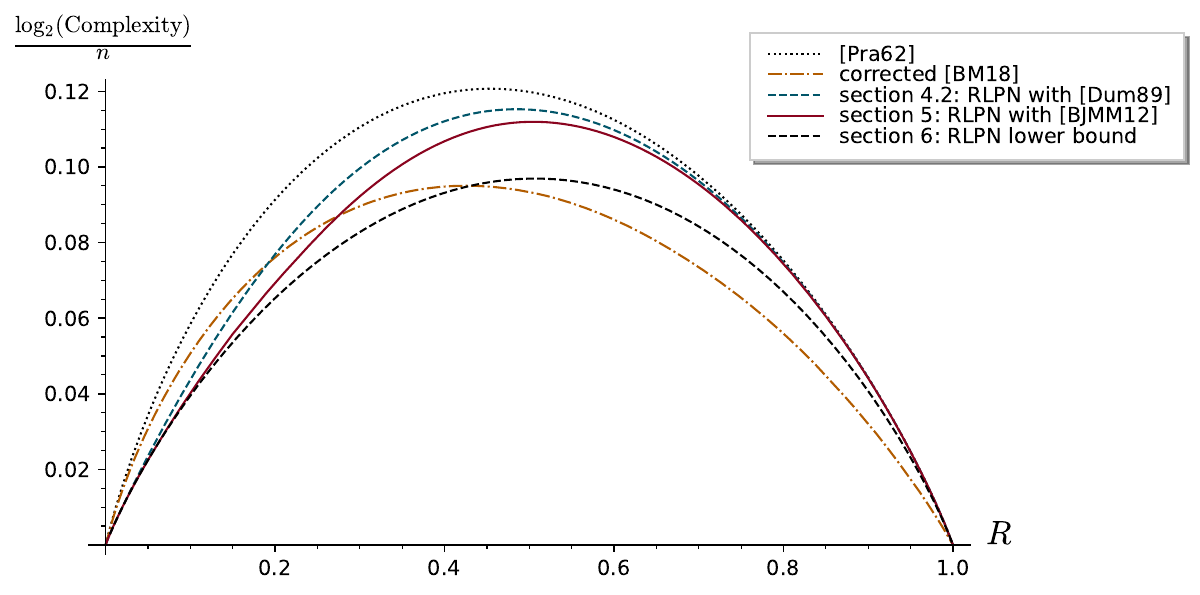}

	\caption{\label{fig:lowerBound} Complexity exponents of our different RLPN decoders, ISD's and the genie-aided RLPN algorithm when splitting parity-checks in two parts.}
\end{figure}

 \section{Concluding remarks}\label{sec:conclusion}

Since Prange's seminal work \cite{P62} in 1962, ISD algorithms have played a predominant role for assessing the complexity of 
code-based cryptographic primitives. In the fixed rate regime, they have been beaten only once in \cite{D86} with the help of collision techniques, and this only for a tiny code rate range ($R \in (0.98,1)$) and for a short period of time \cite{S88,D89} until these collision techniques were merged with the collision techniques to yield modern ISD's. Surprisingly enough, these improved ISD have resulted in decoding complexity curves tilting more and more to the left (\ie with a maximum which is attained more and more below $\frac{1}{2}$) instead of being symmetric around $\frac{1}{2}$ as it could have been expected. It is precisely for rates below $\frac{1}{2}$ that RLPN decoding is able to outperform the best ISD's. This seems to point to the fact that it is precisely for this regime of parameters that we should aim for improving them. Interestingly enough, even if there is some room of improvement for RLPN decoding by using better strategies for producing the needed low weight parity-checks, there is a ceiling that this technique can not break (at least if we just split the parity-checks in two parts) and which is 
extremely close at rate $R=0.45$ to the best ISD algorithm \cite{BM18}. The RLPN decoding algorithm presented here has not succeeded in changing the landscape 
for very tiny code rates ($R$ going to $0$), since the complexity exponent of RLPN decoding approaches the one of exhaustive search on codewords, but the speed at which this complexity approaches exhaustive search is much smaller than for ISD's in the full decoding regime (\ie at the GV distance). The success of RLPN decoding for $R < 0.3$ could be traced back precisely to this behaviour close to $0$. An interesting venue for research could be to try to explore if there are other decoding strategies that would be candidate for beating exhaustive search in the tiny code rate regime. 

{Note however that like dual attacks in lattice based cryptography, the success of this algorithm relies on assumptions of the noise model we get from the low weight parity-check equations we produce (which is similar to the vectors in the dual lattice of small norm we use for dual attacks). The strict LPN model for this noise (Assumption \ref{ass:LPN}) has been found out not to be completely accurate for the large Fourier coefficients obtained during decoding the $[N,s]$-code with Fourier techniques (see Subsection \ref{ss:assumption}). However, 
a weaker conjecture, namely Conjecture \ref{con:main}, is enough for guaranteeing the success of this decoding method and is compatible with the experiments we have made. There is a rather clear path for verifying at least semi-heuristically this conjecture and this will be the object of further studies about this algorithm.}

\section*{Acknowledgement}We would like to express our warm gratitude to Elena Kirshanova and the Asiacrypt $22$' reviewers for their precious comments and remarks.
We wish also to thank Ilya Dumer for his very insightful thoughts about decoding linear codes in the low rate regime.

The work of TDA was funded by the French Agence Nationale de la Recherche through ANR JCJC COLA (ANR-21-CE39-0011). The work of Charles Meyer-Hilfiger was funded by the French Agence de l'innovation de d\'efense and by Inria.

\bibliographystyle{alpha}
\addcontentsline{toc}{section}{Bibliography}

\newpage
\newpage 
\appendix
\thispagestyle{empty}
\begin{center}
\vspace*{\stretch{1}}
{\Huge Appendices}
\vspace*{\stretch{1}}
\end{center}

\newpage 

\section{Asymptotic complexity of the RLPN decoder used in conjunction with the \cite{MMT11} and \cite{BJMM12} technique}
\label{sec:RLPN-MMT-BJMM}

\begin{proof}[Proof of Proposition \ref{pro:RLPNMMT}]

In the case of the MMT technique we have the following set of equalities 
\begin{equation}
2^{\ell_1}  =  \binom{w_2}{w_2/2}\binom{n-k+\ell-w_2}{p_1-w_2/2}
  \label{eq:representationMMT}
 \end{equation}
 \begin{equation}
S_0 =  \binom{\frac{n-k+\ell}{2}}{\frac{p}{2}}, \;\;\;
S_1 =  \frac{\binom{n-k+\ell}{p}}{2^{\ell_1}}, \;\;\;
S_2 =  \frac{\binom{n-k+\ell}{w_2}}{2^{\ell}}
\end{equation}
\begin{equation}
T_0 = S_0, \quad
T_1  =  S_0 + \frac{S_0^2}{2^{\ell_1}}, \quad 
T_2  =  S_1 + \frac{S_1^2}{2^{\ell-\ell_1}}, \quad 
\Teq  =  T_0+T_1+T_2
\end{equation}
We let 
$$
\sigma \eqdef \frac{s}{n}, \quad\lambda \eqdef \frac{\ell}{n}, \quad \lambda_1 \eqdef \frac{\ell_1}{n}, \quad \pi \eqdef \frac{p}{n}, \quad \omega_2 \eqdef \frac{w_2}{n}, \quad \nu_1 \eqdef \frac{v_1}{n}, \quad \nu_2 \eqdef \frac{v_2}{n}.
$$
The equality 
$$2^{\ell_1}  =  \binom{w_2}{w_2/2}\binom{n-k+\ell-w_2}{p_1-w_2/2}$$ translates into 
$$
\lambda_1  =   \omega_2 + (1-R+\lambda - \omega_2)h\left( \frac{\pi-\omega_2/2}{1-R+\lambda-\omega_2}\right).
$$
In our case, the number $N$ of parity-check which are produced is equal to $S_2$ and therefore the condition
$$
N = \Omt{\left(\frac{\binom{k-\ell-s}{w_1}\binom{n-k+\ell}{w_2}}{K_{w_1}^{k-\ell-s}(u_1)}\right)^2}.
$$
of Proposition \ref{prop:complexity3} becomes
\begin{multline*} 
(1-R+\lambda)h\!\left(\!\!\frac{\omega_2}{1-R+\lambda}\!\!\right)-\lambda = -\!(R-\lambda-\sigma)\tilde{\varepsilon}(\frac{\nu_1}{R-\lambda-\sigma},\frac{\omega_1}{R-\lambda-\sigma}) \\ 
-
(1-R+\lambda)\widetilde{\delta}(\frac{\nu_2}{1-R+\lambda},\frac{\omega_2}{1-R+\lambda}).
\end{multline*} 

\end{proof}
\begin{proposition}\label{pro:RLPN-BJMM}
The asymptotic complexity exponent of the RLPN decoder based on \cite{BJMM12} is given by
\begin{equation}
\alpha_{\text{BJMM}}(R) = \min_{(\sigma,\nu_1,\nu_2,\lambda,\lambda_1,\lambda_2,\pi_1,\pi_2,\omega_1,\omega_2) \in \sR} \beta(\sigma,\nu_1,\nu_2,\lambda,\lambda_1,\lambda_2,\pi_1,\pi_2,\omega_1,\omega_2)
\end{equation}
where
\begin{eqnarray*}
\beta & \eqdef & \min ( \sigma,\nu'+ \rho) \;\;\text{with} \\
\nu' &= & \max(\gamma_1,\gamma_2,\gamma_3), \quad \nu =(1-R+\lambda)h\left(\frac{\omega_2}{1-R+\lambda}\right)-\lambda, \; \;\\
\gamma_1 & = & \max\left(\frac{1-R+\lambda}{2}h\left(\frac{\pi_1}{1-R+\lambda} \right),(1-R+\lambda)h\left(\frac{\pi_1}{1-R+\lambda} \right)-\lambda_1\right)\\
\gamma_2 & = & 2(1-R+\lambda)h\left(\frac{\pi_1}{1-R+\lambda} \right)-\lambda_1-\lambda_2\\
\gamma_3 & = & 2(1-R+\lambda)h\left(\frac{\pi_2}{1-R+\lambda} \right)-\lambda - \lambda_2\\
\rho &= & 1-R -\sigma h\left( \frac{\tau-\nu_1-\nu_2}{\sigma}\right) 
-(R-\lambda-\sigma) h\left( \frac{\nu_1}{R-\lambda-\sigma} \right) -(1-R+\lambda) h \left( \frac{\nu_2}{1-R+\lambda}\right)\\
\tau & \eqdef & \deltaGV(R) = h^{-1}(1-R)
\end{eqnarray*}
and the constraint region $\sR$ is defined by the sub-region of nonnegative tuples $(\sigma,\nu_1,\nu_2,\lambda,\lambda_1,\lambda_2,\pi_1,\pi_2,\omega_1,\omega_2)$ such that
$$
	\sigma \leq R-\lambda, \quad \lambda_1 \leq \lambda_2 \leq \lambda, \pi_1 \leq \pi_2 \leq \omega_2, \quad  \nu_1 \leq R- \lambda - \sigma, \quad   \nu_2 \leq 1- R + \lambda, \quad 
	\tau- \sigma \leq \nu_1+\nu_2 \leq \tau, \quad \nu \leq \sigma,
$$
$$
\pi_1 \geq \frac{\pi_2}{2}, \quad \pi_2 \geq \frac{\omega_2}{2}, \quad 
\pi_2 < \lambda_1, \quad \omega_2 < \lambda_2, \quad
\pi_1 < 1-R + \lambda, \quad \omega_1 = \frac{R - \lambda - \sigma}{2} ,
$$
\begin{eqnarray*}
	\lambda_1 & = &  \pi_2 + (1-R+\lambda - \pi_2)h\left( \frac{\pi_1-\pi_2/2}{1- R+\lambda-\pi_2}\right),\\
	\lambda_2 & = &  \omega_2 + (1-R+\lambda - \omega_2)h\left( \frac{\pi_2-\omega_2/2}{1- R+\lambda-\omega_2}\right)
\end{eqnarray*}
and
$$
	\nu = -\!(R-\lambda-\sigma)\widetilde{\delta}(\frac{\nu_1}{R-\lambda-\sigma},\frac{\omega_1}{R-\lambda-\sigma})  
	-
	(1-R+\lambda)\widetilde{\delta}(\frac{\nu_2}{1-R+\lambda},\frac{\omega_2}{1-R+\lambda})
$$
where $\widetilde{\delta}$ is the function defined in Proposition \ref{prop:bias0}.

\end{proposition}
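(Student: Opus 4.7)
The strategy is to proceed exactly as in the proof of Proposition \ref{pro:RLPN-MMT}, but with one additional merging layer so that the list $\sL^3$ produced by the three-stage \cite{BJMM12} procedure plays the role of the set $\sH$ of low weight parity-checks fed into RLPN. First I would combine Proposition \ref{prop:complexity3} (which gives $T = \widetilde{O}((T_\mathrm{eq}+2^s)/P_\mathrm{succ})$ and the LPN constraint $N = \widetilde{\Omega}((\binom{k-\ell-s}{w_1}\binom{n-k+\ell}{w_2}/(K_{w_1}^{k-\ell-s}(u_1)K_{w_2}^{n-k+\ell}(u_2)))^2)$) with the explicit expressions for the list sizes $S_0,S_1,S_2,S_3$, the merge costs $T_0,T_1,T_2,T_3$, and the representation equalities \eqref{eq:representation} specified in Section \ref{sec:BJMM}. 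Under the identifications $p_3 = w_2$, $N = S_3$ and $w_1 = (k-\ell-s)/2$, this yields a closed system whose log, divided by $n$, I normalize to the asymptotic variables $\sigma=s/n$, $\lambda=\ell/n$, $\lambda_i=\ell_i/n$, $\pi_i = p_i/n$, $\omega_j = w_j/n$, $\nu_j = u_j/n$.

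Next I would read off each exponent. Using $\log_2\binom{an}{bn}/n \to a\, h(b/a)$, the representation equalities become the two implicit equations defining $\lambda_1$ and $\lambda_2$ in the statement. The exponent $\gamma_1$ is the $\max$ of the exponents of $S_0$ and $S_1$; $\gamma_2$ is the exponent of the second term $S_1^2/2^{\ell_2-\ell_1}$ of $T_2$; and $\gamma_3$ is the exponent of $S_2^2/2^{\ell-\ell_2}$ in $T_3$. Since in the chosen regime $S_2$ is dominated by the preceding quantities, the full exponent of $T_\mathrm{eq}$ reduces to $\nu = \max(\gamma_1,\gamma_2,\gamma_3)$. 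The exponent of $1/P_\mathrm{succ}$ is obtained directly from the multinomial $\binom{s}{t-u_1-u_2}\binom{k-\ell-s}{u_1}\binom{n-k+\ell}{u_2}/\binom{n}{t}$ and, using $h(\tau)=1-R$ at the GV bound, collapses to the expression $\rho$ in the statement; the overall exponent $\beta$ is then $\max(\sigma,\nu)+\rho$ (I would flag what looks like a $\min$/$\max$ typo in the proposition and use $\max(\sigma,\nu)+\rho$ consistently with Proposition \ref{pro:RLPN-MMT}). Finally, the LPN constraint is translated via Lemma \ref{lem:bias2} and Proposition \ref{prop:bias}: rescaling $N = S_3 \sim 2^{n((1-R+\lambda)h(\omega_2/(1-R+\lambda))-\lambda)}$ and equating with $1/\varepsilon^2 = 1/(\delta_1\delta_2)^2$ produces the displayed $\widetilde{\varepsilon}$-equality, after observing that $\omega_1 = (R-\lambda-\sigma)/2$ is automatically a root-neighbouring value so that Proposition \ref{prop:bias} applies on both zones.

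The main subtlety, and really the only non-routine part, is to verify that the parameter region in the statement is exactly the feasibility region of the algorithm. Specifically one must check: (i) that for the representation technique to make sense one needs $\pi_1 \geq \pi_2/2$, $\pi_2 \geq \omega_2/2$ and $\pi_2 < \lambda_1$, $\omega_2 < \lambda_2$, which guarantees that the binomial coefficients in the right-hand sides of the representation equalities are the dominant ones and that the representations are unique with high probability; (ii) that filtering the output list so that the parity-checks are uniformly distributed on $\sN_2$ (see the discussion after Lemma \ref{lem:bias2}) only costs a polynomial factor and therefore does not affect the exponent; and (iii) that the outer constraint $N \leq \binom{n-s}{w}/2^{k-s}$ from Proposition \ref{prop:complexity3} is automatically implied by the production inequalities, since $N = S_3$ counts codewords of $\CC^\perp$ of the right shape. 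Once these three checks are in place, the minimisation over the constraint region is exactly what the algorithm optimises over, and the asymptotic formula for $\alpha_{\text{BJMM}}(R)$ follows.
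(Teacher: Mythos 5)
Your plan follows exactly the approach the paper takes for the MMT variant in Appendix~\ref{sec:RLPN-MMT-BJMM}: you add the third merge layer, read off the normalized exponents of $S_0,S_1,T_1,T_2,T_3$ from the recursion and representation constraints \eqref{eq:representation}, and translate the LPN condition of Proposition~\ref{prop:complexity3} via Lemma~\ref{lem:bias2} into the $\widetilde{\varepsilon}$-equality; since the paper itself only proves the MMT case and states the BJMM case by analogy, this is exactly the intended argument. Your flag of $\min(\sigma,\nu+\rho)$ as a typo for $\max(\sigma,\nu)+\rho$ (matching Proposition~\ref{pro:RLPN-MMT} and the formula $T=\OOt{(\Teq+2^s)/\Psucc}$) is correct.
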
 \newpage
\section{About \cite{BM18}}
\label{sec:BM18}

We have compared all along the paper our results to the state-of-the-art for solving the decoding problem, that is \cite{BM18}. Actually, this paper claims better results than those presented in this paper (in particular in figures \ref{fig:compISDRLPN} and \ref{fig:lowerBound}). Indeed, in \cite{BM18}, the authors pretend that we can get a gain of $\approx$ 7\% for full decoding and at the worst code rate in comparison with \cite{BM17} (which is the state-of-the-art before \cite{BM18}).
However, this result is flawed: there is indeed an error  in the analysis of this new decoding algorithm which leads to this result. The decoding algorithm in \cite{BM18} essentially consists in producing lists recursively by using nearest-neighbor searches at each stage; the solution of the decoding problem is then contained in the last list. Thus, an accurate analysis of this algorithm lies in the good estimation of the size of the various lists. In \cite[Section 4, p. 16]{BM18}, an upper bound of the size of these lists is given. Unfortunately this upper bound does not hold.
Let us recall it. It is a bound of the list-size  $S_i$ at stage $i$:
\begin{equation}
	\label{eq:BM18}
		S_i \leq   \left|\{\  \mathbf{x} \in \mathbb{F}^k_2, |\mathbf{x}| = p_i\}\right|  \; \underset{\mathbf{x} \in \mathbb{F}^{\ell_i}_2}{\mathbb{P}} \left(|\mathbf{x}| = \omega_i^{(i)}\right) \displaystyle{\prod_{h=1}^{i-1}} \underset{\mathbf{x},\mathbf{y} \in \mathbb{F}^{\ell_h}_2}{\mathbb{P}} \left(|\mathbf{x}+\mathbf{y}| = \omega_h^{(i)} \ : \  |\mathbf{x}| = |\mathbf{y}| = \omega_h^{(i-1)}\right)
\end{equation}

We use here the notation from \cite{BM18}: at each stage $i \in \IInt{1}{m}$, the algorithm is producing lists of $S_i$ vectors with weight $p_i$ on an information set and weight $\left(\omega_h^{(i)} \right)_{h \in \IInt{1}{i}}$ on redundancy subsets of respective length $\left(\ell_i \right)_{h \in \IInt{1}{i}}$. Since the list at the final stage $m$ contains the solution, the decoding distance $\omega$ that we aim to achieve is $\omega = p_3 + \sum_{i = 1}^{m} \omega_i$.

We have noticed that there is a problem with \eqref{eq:BM18} by running simulations\footnote{The code in C for the aforementioned simulations can be found on \url{https://github.com/tillich/RLPNdecoding/tree/master/aboutBM18}.}. For instance, let look at the following parameters for depth $m = 3$:\\

\begin{center}
	\begin{tabular}{rcccccccc}
		Decoding problem parameters: & \ \ \ \ & $n = 120$ & \ \ \ & $k = 30$ & \ \ \ \  & $\omega = 42$ & \ \ \ \ & \ \ \ \  \\[0.2cm]
		Size of the redundancy subsets: & & $\ell_1 = 32$ & & $\ell_2 = 13$ &  & $\ell_3 = 36$ &  \\[0.2cm]
		Stage 1 parameters: &  & $p_1 =64$ & & $\omega_1^{(1)} = 12$ & & &  \\[0.3cm]
		Stage 2 parameters: & & $p_2 = 8$ & & $\omega_1^{(2)} = 24$ & & $\omega_2^{(2)} = 8$ & & \\[0.3cm]
		Stage 3 parameters: & & $p_3 = 10$ & & $\omega_1^{(3)} = 8$ & & $\omega_2^{(3)} = 10$ & & $\omega_3^{(3)} = 14$ 
	\end{tabular}
\end{center}

The Equation \eqref{eq:BM18} tells us that $S_2$ should not be bigger than $513$ whereas in practice it is around $2250$. In an other way, for those parameters, we have in addition that $S_3$ should be lower than $438$; which appears to be the case in practice but this time, the bound seems to not be tight at all since the lists at the last stage are actually of size around $8$. Finally, the upper bound \eqref{eq:BM18} seems to be at best not tight and at worst wrong.

The problem lies in the fact that in \eqref{eq:BM18}, the probability 
\begin{equation}
	P_i \eqdef \underset{\mathbf{x} \in \mathbb{F}^{\ell_i}_2}{\mathbb{P}} \left(|\mathbf{x}| = \omega_i^{(i)}\right)  \cdot \displaystyle{\prod_{h=1}^{i-1}} \underset{\mathbf{x},\mathbf{y} \in \mathbb{F}^{\ell_h}_2}{\mathbb{P}} \left(|\mathbf{x}+\mathbf{y}| = \omega_h^{(i)} \ : \  |\mathbf{x}| = |\mathbf{y}| = \omega_h^{(i-1)}\right)
\end{equation}
represents the probability that a pair from the lists at stage $i-1$ produces an element in the list at stage $i$, but it is not the probability that a vector of length $k$ and weight $p_i$ is in that list as it is suggested by  Equation \eqref{eq:BM18}. If we do not filter the duplicates, the actual expected size of the lists at stage $i > 1$ is $N_i P_i$ where
\begin{equation}
	N_i \eqdef S_{i-1}^2 \cdot \frac{\binom{k - p_{i-1}}{p_i/2}\binom{p_{i-1}}{p_i/2}}{\binom{k}{p_{i-1}}}
\end{equation}
is the number of pairs in a list at stage $i-1$ which are a representation of a vector of weight $p_i$. Note that we still have 
\begin{equation}
S_1 = \OO{\frac{\binom{k}{p_1} \binom{\ell_1}{\omega_1^{(1)}}}{2^{\ell_1}}}.
\end{equation}

After filtering the duplicates in the resulting lists, we finally obtain
\begin{equation}
	\label{eq:BM18goodformula}
	S_i = \OO{\min \left( N_i P_i\ , \ L_i \right)}
\end{equation}
where $L_i$ is the maximal size of the list, obtained when the whole set of vectors with the desired weight distribution is typically produced by the algorithm at the considered stage:
\begin{equation}
	\label{eq:BM18total}
	L_i  = \binom{k}{p_i} \prod_{h=1}^{i} \frac{\binom{\ell_h}{\omega_h^{(i)}} }{2^{\ell_h}}
\end{equation}

Note that if the assumptions in the correctness lemma \cite[Lemma 2, p.~17]{BM18} are met, then we actually have $S_{i} = L_i$ for all $i \in \IInt{1}{m}$.\\

In \cite[Theorem 3, Section 5, p.~19]{BM18} the following parameters are given for the full decoding at the hardest rate $\frac{k}{n} = 0.46$:
{\small
\begin{align*}
	& & \frac {\ell_1}{n} & =  0.0366 & \frac {\ell_2}{n} & = 0.0547  &  \frac {\ell_3}{n} &=  0.0911  &  \frac {\ell_4}{n} &=  0.3576  \\
	\frac{p_1}{n} &= 0.00559  & \frac{\omega_1^{(1)}}{n} &= 0.011515 & & & & & &  \\
	\frac{p_2}{n} &=  0.01073 & \frac{\omega_1^{(2)}}{n} &=0.023029 & \frac{\omega_2^{(2)}}{n}  &= 0.016676 & & & & \\
	\frac{p_3}{n} &= 0.02029 & \frac{\omega_1^{(3)}}{n} &=  0.0232 & \frac{\omega_2^{(3)}}{n}  &=0.033351 & \frac{\omega_3^{(3)}}{n}  &=  0.009993 & &   \\
	\frac{p_4}{n} &= 0.03460& \frac{\omega_1^{(4)}}{n} &=  0.0066 & \frac{\omega_2^{(4)}}{n}  &= 0.0099   & \frac{\omega_3^{(4)}}{n}  &= 0.0114 & \omega_4^{(4)}&=0.0612 
\end{align*}
}
With these parameters, if we believe the bound \eqref{eq:BM18}, we should have
\begin{align*}
	S_0 \leq 2^{0.02179 n} & & S_1 \leq 2^{0.03987 n} & & S_2 \leq 2^{0.05939 n}  & & S_3 \leq 2^{0.05975 n}
\end{align*}
But since the correctness lemma is verified we can use formula \eqref{eq:BM18total} to compute the expected size of the lists:
\begin{align*}
	S_0 = 2^{0.02179 n} & & S_1 = 2^{0.03987 n} & &S_2 = 2^{0.0655 n}  & & S_3= 2^{0.0705 n}
\end{align*}
We can see that $S_2$ and $S_3$ exceed their presumed upper bound, which increases the complexity of the algorithm to $\OO{2^{0.1083 n}}$ in comparison to the claimed one $\OO{2^{0.0885 n}}$.\\

As a result, we have re-optimized the Both-May algorithm by replacing the bound \eqref{eq:BM18} by the new formula \eqref{eq:BM18goodformula}. Actually, we have slightly modified the algorithm in \cite{BM18} by replacing the nearest-neighbor search routine, that stems from \cite{MO15}, by a more recent one that we can found in \cite{C20b} or \cite{EKZ21}:
\begin{theorem}[{\cite[Corollary 7.2.3, p.~183]{C20b}} or {\cite[Theorem 1, p.~8]{EKZ21}}]
For any constants $\lambda \in \left[0,1\right]$ and $\omega \in \left[0,\frac{1}{2}\right]$, when $n$ tends to infinity, the time complexity for finding all the pairs (except $\oo{1}$ of them) of binary vectors at distance $\lfloor \omega n \rfloor$ in a list of $\OO{2^{\lambda n}}$ vectors of length $n$ is
\begin{equation}
\mathrm{NNS}\left( n, \lfloor \omega n \rfloor, 2^{\lambda n} \right)  = \OOt{2^{\alpha n}}
\end{equation}
where
\begin{equation*}
\alpha = \left\lbrace 
\begin{array}{ll}
(1 - \omega)\left( 1 - h\left( \frac{h^{-1}(1 - \lambda) - \omega/2}{1 - \omega}\right)\right) \ \ \ \ \ & \text{if } \frac{1 - \sqrt{1 - 2\omega}}{2} < h^{-1}(1 - \lambda)\\
2\lambda - 1 + h(\omega) & otherwise
\end{array}
\right.
\end{equation*}
\end{theorem}

Armed with this tool and considering the new estimation of the list sizes, the corrected time complexity of the Both-May decoder is then
\begin{equation}
T_{\mathrm{BM18}} = \OOt{\dfrac{\max_{i \in \IInt{1}{m}}\left( T_i \right)}{P_{\mathrm{succ}}}}
\end{equation}
where $T_i \eqdef \mathrm{NNS}\left(\ell_i, \omega_i^{(i)}, S_{i-1}\right)$ is the cost for producing the lists at the stage $i$ and $P_{\mathrm{succ}}$ is the probability of success of an iteration of the Both-May algorithm. When the conditions of the correctness lemma \cite[Lemma 2, p.~17]{BM18} are met, we have
\begin{equation}
P_{\mathrm{succ}} = \dfrac{\binom{k}{p_m} \prod_{i = 1}^{m} \binom{\ell_i}{\omega_i^{(m)}}}{\binom{n}{t}}
\end{equation}

Finally, Figure \ref{fig:BothMay} illustrates the results we obtained with our new analysis of \cite{BM18}. In this figure, the corrected complexity of Both-May algorithm is indistinguishable from the complexity of the \cite{BM17} decoder. More precisely, the first one is slightly better than the second; in particular, the optimized complexity for the full decoding problem is $\OO{2^{0.0953n}}$ with the \cite{BM17} decoder at the hardest rate $R = 0.46$ and it is $\OO{2^{0.0950n}}$ with \cite{BM18} at the hardest rate $R = 0.424$.\footnote{The code in C++ for optimizing the corrected \cite{BM18} complexity and some tables containing the optimized parameters for full decoding at various rates are given on \url{https://github.com/tillich/RLPNdecoding/tree/master/aboutBM18}}

\begin{figure}[h!]
	\centering
	\includegraphics[height=6.2cm]{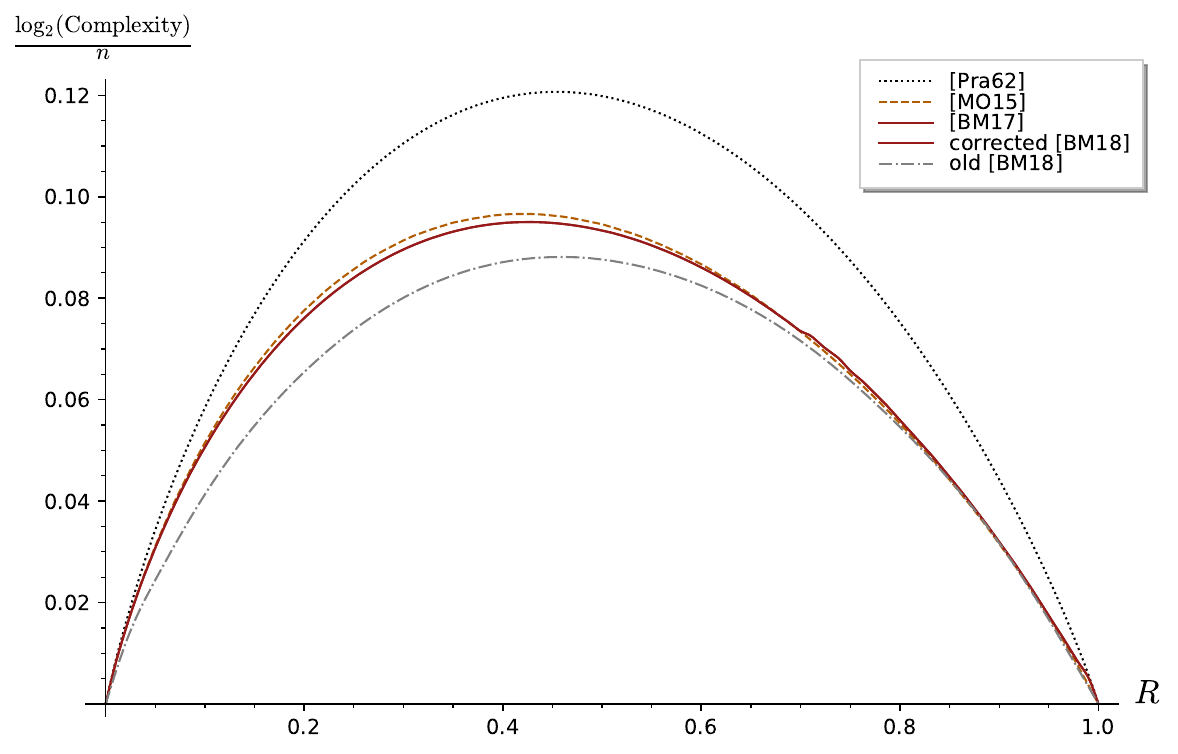}

	\caption{\label{fig:BothMay} Corrected complexity of the \cite{BM18} decoder.}
\end{figure} 
\end{document}